\newtheorem{theorem}{Theorem}
\newtheorem{lemma}{Lemma}
\newtheorem{proposition}{Proposition}
\theoremstyle{definition} \newtheorem{definition}{Definition}
\newtheorem{condition}{Condition}
\newtheorem{assumption}{Assumption}
\newtheorem{remark}{Remark}
\newtheorem{problem}{Problem}
\newcommand{\col}{\color{red}}
\newcommand{\colb}{\color{blue}}
\newcommand{\colg}{\color{green!45!black}} 
\begin{document}
%
\title{Anonymous Hedonic Game for Task Allocation in a Large-Scale Multiple Agent System}
%
%
%

\author{Inmo~Jang,~
        Hyo-Sang~Shin,~
        and~Antonios~Tsourdos
\thanks{Inmo Jang, Hyo-Sang Shin, and Antonios Tsourdos are with Centre for Autonomous and Cyber-Physical Systems, 
Cranfield University, MK43 0AL, United Kingdom
(e-mail: inmo.jang@cranfield.ac.uk; h.shin@cranfield.ac.uk; a.tsourdos@cranfield.ac.uk). }
}

\maketitle

\begin{abstract} 
This paper proposes a novel game-theoretical autonomous decision-making framework to address a task allocation problem for a swarm of multiple agents. We consider cooperation of self-interested agents, and show that our proposed decentralized algorithm guarantees convergence of agents with \emph{social inhibition} to a Nash stable partition (i.e., social agreement) within polynomial time. The algorithm is simple and executable based on local interactions with {\col neighbor} agents under a strongly-connected communication network and even in asynchronous environments. We analytically present a mathematical formulation for computing the lower bound of suboptimality of the solution, and additionally show that 50\% of suboptimality can be {\colg at least} guaranteed if social utilities are non-decreasing functions with respect to the number of co-working agents. The results of numerical experiments confirm that the proposed framework is scalable, fast adaptable against dynamical environments, and robust even in a realistic situation. 
\end{abstract}

\begin{IEEEkeywords}
Distributed robot systems, 
Networked robots, 
Task allocation,
Game theory,
Self-organising systems
\end{IEEEkeywords}

%
\IEEEpeerreviewmaketitle




\section{Introduction}\label{sec:intro}

Cooperation of a large number of possibly small-sized robots, called \emph{robotic {\colg swarm}}, will play a significant role in complex missions that existing operational concepts using a few large robots could not deal with \cite{Shin2014a}.
Even if every single robot (or called \emph{agent}) in a swarm is incapable of accomplishing a task alone, 
their cooperation will lead to successful outcomes \cite{Khamis2015, Jevtic2012, Sahin2005, Dorigo2014}.
The possible applications include environmental monitoring \cite{Barton2013}, ad-hoc network relay \cite{Bekmezci2013}, 
disaster management \cite{Erdelj2017}, cooperative radar jamming \cite{Jang2017}, to name a few.

Due to the large cardinality of a swarm robot system, however, 
it is infeasible for human operators to supervise each {\colg agent} directly, but needed to entrust the swarm with certain levels of decision-makings (e.g., task allocation, path planning, and individual control). 
Thereby, {\col what only remains} is to provide a high-level mission description, which is manageable for a few or even a single human operator. 
Nevertheless, there still exist various challenges in the autonomous decision-making of robotic swarms. 
{\col Among} them, this paper addresses a task allocation problem 
where the number of agents is higher than that of tasks: how to partition a set of agents into subgroups and assign the subgroups to each task. 
In the problem, it is assumed that each agent can be assigned to at most one task, whereas each task may require multiple agents: this case falls into ST-MR (single-task robot and multi-robot task) category \cite{Gerkey2004, Korsah2013}.

According to \cite{Sahin2005, Dorigo2014, Bandyopadhyay2017, Brambilla2013, Johnson2011}, decision-making frameworks for {\col a robotic swarm should be} 
{\col \emph{decentralized} (i.e.,} the desired collective {\col behavior} can be achieved by individual agents {\col relying} on local information), 
\emph{scalable}, \emph{predictable} {\col (e.g., regarding convergence performance and outcome quality), and} 
\emph{adaptable} to dynamic environments (e.g., {\col unexpected} elimination or addition of agents or tasks).  
{\col Moreover, the frameworks are also desirable to be} \emph{robust} {\col against} asynchronous environments {\col because, d}ue to the large cardinality of the system and its {\col decentralization}, it is very challenging for {\col every} agent to {\col behave} synchronously. 
For {\col synchronization} in practice, ``artificial delays and extra communication must be built into the framework'' \cite{Johnson2011}, which may cause considerable inefficiency on the system. 
{\col In addition, it is also preferred to be capable of} accommodating different interests of agents (e.g., different swarms operated by different {\col organization}s \cite{Clark2009}).
In this paper, 
we propose a novel decision-making framework based on hedonic games \cite{Dreze1980, Banerjee2001, Bogomolnaia2002}. 
{\colg The task allocation problem considered is modeled as}
a coalition-formation game {\colg where} self-interest agents are willing to form coalitions to improve their own interests. 
The objective of this game is to find a \emph{Nash stable} partition, which is a social agreement where all the agents agree with the current task assignment.
Despite any possible conflicts between the agents, this paper shows that if they have \emph{social inhibition}, then a Nash stable partition can always be determined within polynomial times in the proposed framework and all the desirable characteristics mentioned above can be achieved. 
Furthermore, we {\col analyze} the lower bound of the outcome's suboptimality and show that 50\% is {\colg at least} guaranteed for a particular case. 
Various settings of numerical experiments validate that the proposed framework is scalable, {\colg adaptable}, and robust even in asynchronous environments. 

This paper is {\col organize}d as follows. 
Section \ref{sec:GRAPE_related_work} reviews existing literature on {\col decentralized} task allocation approaches 
and introduces a recent finding in hedonic game{\colg s} that inspires this study. 
Section \ref{GRAPE} proposes {\colg our} decision-making framework, named \emph{GRAPE}, and analytically proves the existence of and the polynomial-time convergence to a Nash stable partition. 
Section \ref{Analysis} discusses the framework's algorithmic complexity, suboptimality, adaptability, {\colg and robustness}. 
Section \ref{sec:min_rqmt} {\colg shows} that the framework can also address a task allocation problem in which each task may need a certain number of agents for completion. 
Numerical simulations in Section \ref{Results} confirm that the proposed framework holds all the desirable characteristics. 
Finally, concluding remarks are followed in Section \ref{Conclusion}.

\section{Related Work}\label{sec:GRAPE_related_work}

\subsection{{\col Decentralized} Coordination of Robotic Swarms}\label{sec:GRAPE_literature}

Existing approaches for task allocation problems can be {\col categorize}d into two branches, depending on how agents eventually reach a converged outcome: \emph{orchestrated} and \emph{(fully) self-{\col organize}d} approaches \cite{Brutschy2014}. 
In the former, additional mechanism such as negotiation and voting model is imposed so that some agents can be worse off if a specific condition is met (e.g., the global utility is better off).
Alternatively, in self-{\col organize}d approaches, each agent simply makes a decision without negotiating with {\colg the} other agents. 
The latter generally induce less resource consumption in communication and computation \cite{Kalra2006}, 
and hence they are preferable in terms of scalability. 
On the other hand, the former usually provide a better quality of solutions with respect to the global utility, and a certain level of suboptimality could be guaranteed \cite{Zhang2013, Choi2009, Segui-gasco2015}. 
A comparison result between them \cite{Kalra2006} presents that as the available information to agents becomes local, the latter becomes to outperform the former.
In the following, we particularly review existing literature on self-{\col organize}d approaches because, for large-scale multiple agent systems, scalability is {\colg at least} essential and it is realistic to regard that the agents only know their local information but instead the global information. 

Self-{\col organize}d approaches can be {\col categorize}d into \emph{top-down approaches} and \emph{bottom-up approaches} according to which level (i.e., between an ensemble and individuals) is mainly focused on. 
Top-down approaches {\col emphasize} developing a macroscopic model for the whole system. 
For {\col instance}, population fractions associated with given tasks are represented as states, and the dynamics of the population fractions {\colg is} {\col modeled} by Markov chains \cite{Acikmese2015, Chattopadhyay2009, Demir2015, Bandyopadhyay2017} or differential equations \cite{Berman2009, Halasz2007, Hsieh2008,  Mather2011, Prorok2016c}. 
Given a desired fraction distribution over the tasks, agents can converge to the desired status by following local decision policies (e.g., the associated rows or columns of the current Markov matrix). 
One advantage of using top-down approaches is 
{\col predictability of average emergent behavior with regard to}  
convergence speed and the quality of a stable outcome (i.e., how well the agents converge to the desired fraction distribution).    
{\colg However,} such prediction, to the best of our knowledge, can be made mainly numerically.
Besides, as top-down generated control policies regulate agents,
it may be difficult to accommodate each agent's individual preference. 
Also, each agent may have to physically move around according to its local policy during the entire decision-making process, which  may cause waste of time and energy costs in the transitioning. 

Bottom-up approaches focus on designing each agent's individual rules (i.e., microscopic models) that eventually lead to a desired emergent {\col behavior}. 
Possible actions of a single agent can be {\col modeled} by a finite state machine \cite{Labella2006}, and {\colg a} change of {\col behavior} occurs according to a probabilistic threshold model \cite{Castello2014}. 
A threshold value in the model {\col determines} the decision boundary {\col between two} motion{\col s.}  This value is adjustable based on an agent's past experiences such as {\colg the} time spent for working a task \cite{Brutschy2014, Kurdi2016}, 
the success/failure rates \cite{Labella2006, Liu2007a}, and direct communication from a central unit \cite{Castello2014}. 
This feature can improve system adaptability, and may have a potential to incorporate each agent's individual interest if required. 
However, it was shown in  \cite{Liu2010b, Martinoli2004, Lerman2005, Correll2006, Liu2007a, Prorok2011, Kanakia2016} that, 
to predict or evaluate an emergent performance of a swarm {\col utiliz}ing bottom-up approaches, a macroscopic model for the swarm is eventually required to be developed by abstracting the microscopic models.

\subsection{Hedonic Games}
\emph{Hedonic games} \cite{Dreze1980, Banerjee2001, Bogomolnaia2002} 
model a conflict situation where self-interest agents are willing to form coalitions to improve their own interests. 
\emph{Nash stability} \cite{Bogomolnaia2002} plays a key role since it yields a social agreement {\col among} the agents even without having any negotiation. 
Many researchers have investigated conditions under which a Nash stable partition is guaranteed to exist and to be determined \cite{Bogomolnaia2002, Dimitrov2006, Darmann2012, Darmann2015}. 
{\col Among} them, the works in \cite{Darmann2012, Darmann2015} mainly addressed an \emph{anonymous hedonic game}, in which each agent considers the size of a coalition to which it belongs instead of the identities of the members. 
Recently, Darmann \cite{Darmann2015} showed that selfish agents who have \emph{social inhibition} (i.e., preference toward a coalition with a fewer number of members) could converge to a Nash stable partition in an anonymous hedonic game. 
The author also proposed a {\col centralized} recursive algorithm that can find a Nash stable partition within $O( n_a^2 \cdot n_t)$ of iterations. Here, $n_a$ is the number of agents and $n_t$ is that of tasks. 

\subsection{Main Contributions}
Inspired by the recent breakthrough of \cite{Darmann2015}, 
we propose a novel {\col decentralized} framework that models the task allocation problem considered as an anonymous hedonic game. 
The proposed framework is a self-{\col organize}d approach in which agents make decisions according to its local policies (i.e., individual preferences). 
Unlike top-down or bottom-up approaches reviewed in the previous section, which primarily concentrate on designing agents' decision-making policies either macroscopically or microscopically, our work instead focuses on investigating and exploiting advantages from socially-inhibitive agents, while simply letting them greedily behave according to their individual preferences. 
Explicitly, the main contributions of this paper are as follows:
\begin{enumerate}
\item This paper shows that selfish agents with social inhibition, which we refer to as \emph{SPAO} preference (Definition \ref{SPAO}), can reach a Nash stable partition within less algorithmic complexity compared with \cite{Darmann2015}: $O(n_a^2)$ of iterations are required\footnote{Note that the definition of \emph{iteration} is described in Definition \ref{def:iteration}. This comparison assumes the fully-connected communication network because the algorithm in \cite{Darmann2015} is {\col centralized}.}. 
\item We provide a {\col decentralized} algorithm, which is executable under a strongly-connected communication network of agents and even in asynchronous environments. Depending on the network assumed, the algorithmic complexity may be additionally increased by $O(d_G)$, where $d_G < n_a$ is the graph diameter of the network. 
\item This paper {\col analyze}s the suboptimality of a Nash stable partition in term of the global utility. 
We firstly present a mathematical formulation to compute the suboptimality lower bound by using the information of a Nash stable partition and agents' individual utilities. 
Furthermore, we additionally show that 50\% of suboptimality can be at least guaranteed if the social utility for each coalition is defined as a non-decreasing function with respect to the number of members in the coalition. 
\item Our framework can accommodate different agents with different interests as long as their individual preferences hold SPAO.
\item Through various numerical experiments, it is confirmed that the proposed framework is scalable, fast adaptable to environmental changes, and robust even in a realistic situation where 
some agents are temporarily unable to proceed a decision-making procedure and communicate with {\colg the} other agents during a mission. 
\end{enumerate}



\begin{table}[h]
\renewcommand{\arraystretch}{1.3}
\caption{Nomenclature}
\label{nomenclature}
\centering
\begin{tabular}{p{0.4in} p{2.7in}}
\hline
Symbol & Description \\
\hline
\hline
$\mathcal{A}$	 		& a set of $n_a$ agents \\
$a_i$		& the $i$-th agent\\
$\mathcal{T}^*$ 		& a set of $n_t$ tasks \\
$t_j$			& the $j$-th task\\
$t_{\phi}$ 		& the void task (i.e., not to work any task) \\
$\mathcal{T}$	 		& a set of tasks, $\mathcal{T}=\mathcal{T}^* \cup \{t_{\phi}\}$\\
$(t_j,p)$		& a task-coalition pair (i.e. to do task $t_j$ with $p$ participants)\\
$\mathcal{X}$			& the set of task-coalition pairs, $\mathcal{X}=\mathcal{X}^* \cup \{t_{\phi}\}$, \\ & where $\mathcal{X}^* = \mathcal{T}^* \times \{1,2,...,n_a\}$\\
$\mathcal{P}_i$		& agent $a_i$'s preference relation over $\mathcal{X}$\\ 
$\succ_i$		& the strong preference of agent $a_i$\\
$\sim_i$		& the indifferent preference of agent $a_i$\\ 
$\succeq_i$ 	& the weak preference of agent $a_i$\\
$\Pi$			& a \emph{partition}: a disjoint set that partitions the agent set $\mathcal{A}$, \\ & $\Pi = \{S_1,S_2,...,S_{n_t},S_\phi\}$\\
$S_j$		& the (task-specific) coalition for $t_j$\\
$\Pi(i)$		& the index of the task to which agent $a_i$ is assigned given $\Pi$ \\
$d_G$		& the graph diameter of the agent communication network\\
${\colg \mathcal{N}_i}$	&  The {\colg neighbor agent set of agent $a_i$ given a network}\\
\hline
\end{tabular}
\end{table}

\section{GRoup Agent Partitioning and placing Event}\label{GRAPE}

\subsection{Problem Formulation}\label{sec:MRTA}
Let us first introduce the multi-robot task allocation problem considered in this paper and underlying assumptions. 
\begin{problem}\label{prob_basic}
Suppose that there exist a set of $n_a$ agents $\mathcal{A} =\{a_1, a_2, ... , a_{n_a}\}$ and
a set of tasks $\mathcal{T}=\mathcal{T}^* \cup \{t_\phi\}$, where ${ \mathcal{T}^*=\{t_1, t_2, ... , t_{n_t}\} }$ is a set of $n_t$ tasks and $t_\phi$ is \emph{the void task} (i.e., not to perform any task). 
Each agent $a_i$ has \emph{the individual utility} $u_i: \mathcal{T} \times {|\mathcal{A}|} \rightarrow \mathbb{R}$, which is a function of the task to which the agent is assigned and the number of {\colg its} co-working agents (including itself) {\colg $p \in \{1,2,...,n_a\}$} (called \emph{participants}).
{\colg The individual utility for $t_{\phi}$ is zero regardless of the participants.}
Since every agent is considered to have limited capabilities to finish a task alone, the agent can be assigned to at most one task.  
The objective of this task allocation problem is to find an assignment that {\col maximize}s \emph{the global utility}, which is the sum of individual utilities of the entire agents. 
The problem described above is defined as follows:
\begin{equation}\label{eqn:obj_ftn}
\max_{\{x_{ij}\}} \sum_{\forall a_i \in \mathcal{A}} \sum_{\forall t_j \in \mathcal{T}} u_{i}(t_j, p) x_{ij} {\colg ,}
\end{equation}
subject to
\begin{equation}
\sum_{\forall t_j \in \mathcal{T}} x_{ij} \le 1{\colg ,} \quad \forall a_i \in \mathcal{A}{\colg ,}
\end{equation}
\begin{equation}
x_{ij} \in \{0,1\}{\colg ,} \quad \forall a_i \in \mathcal{A}, \forall t_j \in \mathcal{T}{\colg ,}
\end{equation}
where $x_{ij}$ is a binary decision variable that indicates whether or not task $t_j$ is assigned to agent $a_i$.
\end{problem}

The term \emph{social utility} is defined as the sum of individual utilities within any agent group.

\begin{assumption}[\emph{Homogeneous agents with limited capabilities}]\label{assum:agents}
This paper considers a large-scale multi-robot system of homogeneous agents since the realisation of a swarm can be in general achieved through mass production \cite{Sahin2005}. 
Therefore, each individual utility $u_i$ is concerned with the cardinality of the agents working for the task. 
Note that agents in this paper may have different preferences with respect to the given tasks, e.g., for an agent, a spatially closer task is more preferred, whereas this may not be the case for another agent.
Besides, noting that ``mass production {\col favors} robots with fewer and cheaper components, resulting in lower cost but also reduced capabilities\cite{Rubenstein2014}", 
it is also assumed that each agent can be only assigned to perform at most a single task. 
According to \cite{Gerkey2004}, such a robot is called a \emph{single-task} (ST) robot. 
\end{assumption}

\begin{assumption}[\emph{Agents' communication}]\label{assum:agents_comm}
The communication network of the entire agents is at least \emph{strongly-connected}, 
{\colb i.e., there exists a directed communication path between any two arbitrary agents.} 
Given a network, $\mathcal{N}_i$ denotes a set of {\col neighbor} agents for agent $a_i$. 
\end{assumption}

\begin{assumption}[\emph{Multi-robot-required tasks}]\label{assum:tasks}
Every task is a \emph{multi-robot} (MR) task, meaning that the task may require multiple robots \cite{Gerkey2004}. 
For now, we assume that each task can be performed even by a single agent although it may take a long time.
However, in Section \ref{sec:min_rqmt}, we will also address a particular case in which some tasks need at least a certain number of agents for completion.
\end{assumption}

\begin{assumption}[\emph{Agents' pre-known information}]\label{assum:agents_util}
Every agent $a_i$ only knows its own individual utility $u_i(t_j,p)$ with regard to every task $t_j$, while not being aware of those of {\colg the} other agents. 
Through communication, however, they can notice which agent currently choses which task, i.e., \emph{partition} (Definition \ref{def:partition}). 
Note that the agents do not necessarily have to know the true partition information at all the time. Each agent owns its locally-known partition information. 
\end{assumption}

\subsection{Proposed Game-theoretical Approach: GRAPE}
Let us transform Problem \ref{prob_basic} into an anonymous hedonic game event where 
every agent selfishly tends to join a coalition according to its preference. 


\begin{definition}[\emph{GRAPE}]
\label{game}
An instance of \emph{GRoup Agent Partitioning and placing Event} (GRAPE) is a tuple $(\mathcal{A}, \mathcal{T}, \mathcal{P})$ 
that consists of 
(1) ${ \mathcal{A} =\{a_1, a_2, ... , a_{n_a}\} }$, a set of  $n_a$ agents; 
(2) ${ \mathcal{T}=\mathcal{T}^* \cup \{t_\phi\} }$, a set of tasks; 
and (3) ${ \mathcal{P}=(\mathcal{P}_1, \mathcal{P}_2, ... , \mathcal{P}_{n_a}) }$, an {$n_a$}-tuple of preference relations of the agents. 
For agent $a_i$, $\mathcal{P}_i$ describes its \emph{preference relation} over the set of task-coalition pairs ${ \mathcal{X}=\mathcal{X}^* \cup \{t_\phi\} }$, 
where ${ \mathcal{X}^*=\mathcal{T}^*\times \{1,2,...,n_a\} }$;
a task-coalition pair $(t_j,p)$ is interpreted as ``to do task $t_j$ with $p$ participants''. 
For any task-coalition pairs ${ x_1, x_2 \in \mathcal{X} }$, 
${ x_1 \succ_i x_2 }$ implies that agent $a_i$ strongly prefers $x_1$ to $x_2$, and 
${ x_1 \sim_i x_2 }$ means that the preference regarding $x_1$ and $x_2$ is indifferent. 
Likewise, $\succeq _i$ indicates the weak preference of agent $a_i$.
\end{definition}

Note that agent $a_i$'s preference relation can be derived from its individual utility $u_i(t_j, p)$ in Problem \ref{prob_basic}. 
For {\col instance}, given that $u_i(t_1,p_1) > u_i(t_2,p_2)$, it can be said that $(t_1,p_1) \succ_i (t_2,p_2)$.

\begin{definition}[\emph{Partition}]\label{def:partition}
Given an instance $(\mathcal{A},\mathcal{T},\mathcal{P})$ of GRAPE, 
a \emph{partition} is defined as a set $\Pi = \{S_1,S_2,...,S_{n_t},S_\phi \}$
that disjointly partitions the agent set $\mathcal{A}$. 
Here, $S_j \subseteq \mathcal{A}$ is the \emph{(task-specific) coalition} for executing task $t_j$ 
such that $\cup^{n_t}_{j=0}S_j=\mathcal{A}$ and $S_j \cap S_k = \emptyset$ for $j \neq k$. 
$S_\phi$ is the set of agents who choose the void task $t_\phi$. 
Note that this paper interchangeably uses $S_0$ to indicate $S_\phi$. 
%
Given a partition $\Pi$, $\Pi(i)$ indicates the index of the task to which agent $a_i$ is assigned.
For {\col example}, $S_{\Pi(i)}$ is the coalition that the agent belongs to, i.e., ${S_{\Pi(i)}= \{S_j \in \Pi \mid a_i \in S_j\}}$. 
\end{definition}

The objective of GRAPE is to determine a stable partition that all the agents agree {\colg with}. 
In this paper, we seek for a \emph{Nash stable} partition, which is defined as follows:

\begin{definition}[\emph{Nash stable}]
\label{Nash_stable}
A partition $\Pi$ is said to be \emph{Nash stable} 
if, for every agent ${ a_i \in \mathcal{A} }$, it holds that ${ (t_{\Pi(i)}, |S_{\Pi(i)}|) \succeq_i  (t_j, |S_j \cup \{a_i\}|)  }$, $\forall {S_j \in \Pi}$. 
\end{definition} 

In other words, in a Nash stable partition, every agent prefers its current coalition to joining any of the other coalitions. 
Thus, every agent does not have any conflict within this partition, and no agent will not unilaterally deviate from its current decision.

\begin{remark}[\emph{An advantage of Nash stability: low communication burden on agents}]
The rationale behind the use of Nash stability {\col among} various stable solution concepts in hedonic games \cite{Sung2007, Dreze1980, Karakaya2011, Aziz2012a} is that it can reduce communication burden between agents required to reach a social agreement. 
In the process of converging to a Nash stable partition, an agent does not need to get any permission from {\colg the} other agents 
when it is willing to deviate. This property may not be the case for the other solution concepts. 
Therefore, each agent is only required to notify its altered decision without any negotiation. 
This fact can reduce {\colg inter-agent} communication 
in the proposed approach. 
\end{remark}

\subsection{SPAO Preference: Social Inhibition}
This section introduces the key condition, called \emph{SPAO}, that enables our proposed approach to provide all the desirable properties described in Section \ref{sec:intro}, 
and then explains its implications. 


\begin{definition}[\emph{SPAO}]
\label{SPAO}
Given an instance ${ (\mathcal{A},\mathcal{T},\mathcal{P}) }$ of GRAPE, it is said that 
the preference relation of agent $a_i$ with respect to task $t_j$ is \emph{SPAO (Single-Peaked-At-One)}
if it holds that,
for every ${ (t_j,p) \in \mathcal{X}^* }$, 
${ (t_j,p_1) \succeq _i (t_j,p_2) }$ for any ${p_1,p_2 \in \{1,...,n_a\} }$ such that ${ p_1 < p_2 }$. 
Besides, we say that an instance ${ (\mathcal{A},\mathcal{T},\mathcal{P})}$ of GRAPE is SPAO 
if the preference relation of every agent in $\mathcal{A}$ with respect to every task in ${\mathcal{T}^*}$ is SPAO. 
\end{definition}

For an example, suppose that $\mathcal{P}_i$ is such that 
\begin{equation*}
(t_1,1) \succ_i (t_1,2) \succeq_i (t_1,3) \succ_i (t_2,1)  \sim_i (t_1,4) \succ_i (t_2,2).
\end{equation*}
This preference relation indicates that agent $a_i$ has $(t_1,1) \succ_i (t_1,2) \succeq_i (t_1,3) \succ_i (t_1,4)$ for task $t_1$, 
and $(t_2,1) \succ_i (t_2,2)$ for task $t_2$. 
According to Definition \ref{SPAO}, the preference relation for each of the tasks holds SPAO. 
For another example, given that
\begin{equation*}
(t_1,1) \succ_i (t_1,2) \succeq_i (t_1,3) \succ_i (t_2,2)  \sim_i (t_1,4) \succ_i (t_2,1), 
\end{equation*}
the preference relation regarding task $t_1$ holds SPAO, whereas this is not the case for task $t_2$ because of $(t_2,2)  \succ_i (t_2,1)$.

This paper only considers the case in which every agent has SPAO preference relations regarding all the given tasks. 
{\colg Such a}gents prefer to execute a task with smaller number of collaborators, namely, they have \emph{social inhibition}.

\begin{remark}[\emph{Implications of SPAO}]
SPAO implies that an agent's individual utility should be a monotonically decreasing function with respect to the size of a coalition. 
In practice, SPAO can often emerge.  
For instance, experimental and simulation results in \cite[Figures 3 and 4]{Guerrero2012} show that the total work capacity resulted from cooperation of multiple robots does not proportionally increase due to interferences of the robots. 
In such a \emph{non-superadditive} environment \cite{Shehory1999}, 
assuming that an agent's individual work efficiency is considered as its individual utility, 
the individual utility monotonically drops as the number of collaborators enlarges even though the social utility is increased.  
For another example, SPAO also arises when individual utilities are related with shared-resources.  
As more agents use the same resource simultaneously, their individual productivities become diminished (e.g., traffic affects travel times \cite{Nam2015} \cite[Example 3]{Johnson2016}). 
As the authors in \cite{Shehory1999} pointed out, a non-superadditive case is more realistic than a superadditive case: agents in a superadditive environment always attempt to form the grand coalition whereas those in a non-superadditive case are willing to reduce unnecessary costs. 
Note that social utility functions are not restricted so that they can be either monotonic or non-monotonic. 
\end{remark}

\begin{remark}[\emph{Cooperation of selfish agents {\colg with different interests}}]
The proposed framework can accommodate selfish agents who greedily follow their individual preferences as long as the preferences hold SPAO. 
This implies that the framework may be {\col utilized} for a combination of swarm systems from different organisations under the condition that the multiple systems satisfy SPAO. 
\end{remark}

\subsection{Existence of and Convergence to a Nash Stable Partition}\label{sec:existence_NS}

Let us prove that 
if an instance of GRAPE holds SPAO, there always exists a Nash stable partition and it can be found within polynomial time.   


\begin{definition}[\emph{Iteration}]\label{def:iteration}
This paper uses the term \emph{iteration} to represent an iterative stage in which an arbitrary agent compares the set of selectable task-coalition pairs given an existing partition, and then determines whether or not to join another coalition including the void task one. 
\end{definition}

\begin{assumption}[\emph{Mutual exclusion algorithm}]\label{assum:mutex}
We assume that, at each iteration, a single agent exclusively makes a decision and updates the {\colg current} partition if necessary. 
This paper refers to this agent as the \emph{deciding agent} at the iteration. 
Based on the resultant partition, another deciding agent also performs the same process at the next iteration, 
and this process continues until every agent does not deviate from a specific partition, which is, in fact, a Nash stable partition. 
To implement this algorithmic process in practice, the agents need a \emph{mutual exclusion} (or called \emph{mutex}) algorithm to choose the deciding agent at each iteration. 
In this section, for simplicity of description, we assume that all the agents are fully-connected, by which they somehow select and know the deciding agent. 
However, in Section \ref{sec:algorithm}, we will present a distributed mutex algorithm that enables the proposed approach to be executed under a strongly-connected communication network even in an asynchronous manner. 
\end{assumption}


\begin{lemma}
\label{lemma_1}
Given an instance ${(\mathcal{A},\mathcal{T},\mathcal{P})}$ of GRAPE that is SPAO, 
suppose that a new agent $a_r \notin \mathcal{A}$ holding a SPAO preference relation with regard to every task in $\mathcal{T}$ joins ${(\mathcal{A},\mathcal{T},\mathcal{P})}$ 
in which a Nash stable partition 
is already established.
Then, the new instance ${(\tilde{\mathcal{A}},\mathcal{T},\mathcal{P})}$, where ${\tilde{\mathcal{A}} = \mathcal{A} \cup \{a_r\} }$, also 
(1) satisfies SPAO; 
(2) contains a Nash stable partition; 
and (3) the maximum number of iterations required to re-converge to a Nash stable partition is ${|\tilde{\mathcal{A}}|}$.      
\end{lemma}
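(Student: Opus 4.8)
The plan is to prove the three claims in order, devoting the bulk of the work to (2) and (3), which I treat together by exhibiting an explicit re-convergence procedure and bounding its length.

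For claim (1) I would simply unfold Definition \ref{SPAO}: SPAO is asserted separately of each agent's preference relation with respect to each task. Since $(\mathcal{A},\mathcal{T},\mathcal{P})$ is SPAO and $a_r$ is assumed to hold a SPAO preference relation for every task in $\mathcal{T}$, every agent in $\tilde{\mathcal{A}}=\mathcal{A}\cup\{a_r\}$ has SPAO preferences for every task, which is exactly what SPAO of $(\tilde{\mathcal{A}},\mathcal{T},\mathcal{P})$ requires. No computation is needed.

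For (2) and (3), let $\Pi^0=\{S_0^0,\dots,S_{n_t}^0\}$ be the given Nash stable partition of $(\mathcal{A},\mathcal{T},\mathcal{P})$, with sizes $s_j^0=|S_j^0|$, and run the following process: first let $a_r$ join the coalition maximizing its preference among the pairs $(t_l,s_l^0+1)$; thereafter, at each iteration the deciding agent is any agent that strictly prefers to deviate, and it moves to its most preferred coalition. I would first establish a \emph{size invariant}: at every iteration exactly one coalition (the \emph{active} one) has size $s^0+1$, while every other coalition keeps its original size $s^0$. This holds initially because $a_r$'s entry adds one member to a single coalition, and it is preserved because each deviation removes the mover from the active coalition (restoring it to $s^0$) and adds it to its destination (raising that to $s^0+1$); the surplus is merely transferred.

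The core step is to show each agent deviates at most once, which simultaneously yields termination (hence existence, claim (2)) and the bound $|\tilde{\mathcal{A}}|$ (claim (3)). The key observation, enabled by the size invariant, is that at the moment any agent $a_i$ acts, its available options are exactly the family $\{(t_l,s_l^0+1)\}_{l}$ (staying corresponds to $l$ being the active coalition); hence the pair $a_i$ settles into is its $\succeq_i$-maximum over this entire family. I would then argue, using SPAO and the invariant, that in every subsequent configuration $a_i$'s current pair weakly dominates all of its then-available options: if $a_i$ sits in a non-active coalition, its current pair $(t_m,s_m^0)$ is by social inhibition even better than its stored maximum $(t_m,s_m^0+1)$, while the alternatives all have the form $(t_l,s_l^0+1)$ or worse; if later the surplus returns to $a_i$'s coalition, its current pair is precisely its stored maximum and so still dominates every alternative. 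Either way $a_i$ never strictly prefers to deviate again. I would also verify the complementary invariant that every agent outside the active coalition is content, so that whenever the partition is not yet Nash stable the deciding agent can be drawn from the active coalition and a genuine deviation occurs. Since $a_r$ enters once and never deviates, and each of the $n_a$ original agents deviates at most once, the partition changes at most $n_a+1=|\tilde{\mathcal{A}}|$ times and must halt at a partition where no agent wishes to deviate, i.e.\ a Nash stable partition.

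The main obstacle is the ``deviates at most once'' argument: the transferred surplus can in principle route the active coalition back to one an agent already occupies, so I cannot argue termination merely by counting coalitions, nor by a naive social-welfare potential (which is not monotone here, since a deviation improves the mover but worsens the destination's incumbents). The argmax-over-$\{(t_l,s_l^0+1)\}_{l}$ characterization is what rules out a second deviation, and making it airtight requires the size invariant to guarantee that an agent's option set has exactly this normalized form at every iteration.
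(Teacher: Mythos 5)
Your proof is correct and follows essentially the same route as the paper: the paper's argument also reduces to showing that the ``$+1$ surplus'' introduced by $a_r$ is passed from coalition to coalition and that an agent which has deviated once never deviates again, which it tracks via a per-agent tolerance quantity $\Delta_{\Pi(i)}$ (the number of additional co-workers an agent can tolerate) rather than via your explicit size invariant and argmax-over-$\{(t_l,s_l^0+1)\}_l$ characterization. Your version makes explicit the bookkeeping that the paper leaves largely implicit, but the decomposition and the key ``deviates at most once'' step are the same.
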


\begin{proof}

Given a partition $\Pi$, {\colg}{for agent $a_i$,}
\emph{the number of {\colg additional co-workers tolerable in} its coalition} is defined as:
{\colg 
\begin{equation}
\resizebox{.48\textwidth}{!}{$
\Delta_{\Pi(i)} := \min\limits_{{S}_j \in \Pi \setminus \{{S}_{\Pi(i)}\}} \max\limits_{\Delta \in \mathbb{Z}} \big\{ \Delta \mid (t_{\Pi(i)},|{S}_{\Pi(i)}| + \Delta) \succeq_i (t_j,|{S}_j \cup \{a_i\}|) \big\}.
$}
\end{equation}
}
%
Due to the SPAO preference relation, this value satisfies the following characteristics:
(a) if $\Pi$ is Nash stable, for every agent $a_i$, it holds that $\Delta_{\Pi(i)} \ge 0$;
(b) if $\Delta_{\Pi(i)} < 0$, then agent $a_i$ is willing to deviate to {\colg another} coalition at a next iteration;
and (c) for the agent $a_i$ who deviated at the last iteration and updated the partition as $\Pi'$, it holds that $\Delta_{\Pi'(i)} \ge 0$. 

From Definition \ref{SPAO}, it is clear that the new instance ${ (\tilde{\mathcal{A}},\mathcal{T},\mathcal{P}) }$ still holds SPAO. 
Let ${\Pi_0}$ denote a Nash stable partition in the original instance ${ (\mathcal{A},\mathcal{T},\mathcal{P}) }$. 
When a new agent $a_r \notin \mathcal{A}$ decides to execute one of {\colg the} tasks in $\mathcal{T}$ and creates a new partition ${ {\Pi}_1}$, 
it holds that $\Delta_{{\Pi}_1(r)} \ge 0$, as shown in (c). 
If there is no existing agent $a_q \in \mathcal{A}$ whose $\Delta_{{\Pi}_1(q)} < 0$, then the new partition ${ {\Pi}_1}$ is Nash stable.

   
Suppose that there exists at least an agent $a_q$ whose $\Delta_{{\Pi}_1(q)} < 0$. 
Then, the agent must be one of {\colg the existing} members in the coalition that agent $a_r$ selected in the last iteration. 
As agent $a_q$ moves to another coalition and creates a new partition ${ {\Pi}_2}$, the previously-deviated agent $a_r$ {\colg holds} $\Delta_{{\Pi}_2(r)} \ge 1$.
In other words, an agent who deviates to a coalition and expels one of the existing agents in that coalition will not deviate again even if another agent joins the coalition in a next iteration.     
This implies that at most ${|\tilde{\mathcal{A}}|}$ of iterations are required to hold $\Delta_{\tilde{{\Pi}}(i)} \ge 0$ for every agent $a_i \in \tilde{\mathcal{A}}$, where the partition $\tilde{\Pi}$ is Nash stable. 
\end{proof}


Lemma \ref{lemma_1} is essential 
not only for the existence of and convergence to a Nash stable partition 
but also for fast adaptability to dynamic environments.

\begin{theorem}[\emph{Existence}]
\label{NASH}
If ${ (\mathcal{A},\mathcal{T},\mathcal{P}) }$ is an instance of GRAPE holding SPAO, then a Nash stable partition always exists. 
\end{theorem}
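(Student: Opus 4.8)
The plan is to prove existence of a Nash stable partition by induction on the number of agents, using Lemma~\ref{lemma_1} as the inductive step. The base case is the empty (or singleton) agent set: when $\mathcal{A}$ contains no agents, the trivial partition in which every coalition is empty is vacuously Nash stable, since there is no agent that could profitably deviate. Alternatively, for a single agent $a_1$, the agent simply selects its most-preferred task-coalition pair of the form $(t_j,1)$ over all $t_j \in \mathcal{T}$ (including $t_\phi$); because $\mathcal{X}$ is finite, such a maximal element exists, and the resulting one-agent partition is Nash stable by construction.

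For the inductive step, I would assume that for every SPAO instance with $k$ agents there exists a Nash stable partition, and then consider an SPAO instance with $k+1$ agents. The idea is to remove one arbitrary agent $a_r$, obtaining an SPAO instance on the remaining $k$ agents (still SPAO, since restricting the agent set does not affect any agent's per-task preference relation). By the induction hypothesis, this reduced instance admits a Nash stable partition. I would then reintroduce $a_r$ and invoke Lemma~\ref{lemma_1}, whose hypotheses are exactly satisfied: we have an SPAO instance with an already-established Nash stable partition, and a new SPAO agent $a_r$ joining it. Lemma~\ref{lemma_1} guarantees that the enlarged instance $(\tilde{\mathcal{A}},\mathcal{T},\mathcal{P})$ again contains a Nash stable partition, completing the induction.

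The essential content of the argument is therefore already carried by Lemma~\ref{lemma_1}; the theorem is a packaging of that lemma into an inductive existence statement. The one point requiring a little care is verifying that the reduced $k$-agent instance genuinely inherits SPAO, so that the induction hypothesis applies to it --- but this is immediate from Definition~\ref{SPAO}, since SPAO is a condition on each agent's preference over coalition \emph{sizes} for each task, and removing an agent changes neither the set of tasks $\mathcal{T}$ nor the surviving agents' preference relations $\mathcal{P}_i$.

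I expect no serious obstacle, since the heavy lifting (the re-convergence after an agent joins, and in particular the subtle ``expelled-agent does-not-deviate-again'' observation) was already discharged in the proof of Lemma~\ref{lemma_1}. The only mild subtlety to flag is that Lemma~\ref{lemma_1} is stated for adding an agent to an instance that \emph{already} has a Nash stable partition, so the induction must be set up to supply exactly that precondition at each step; establishing the base case and then peeling off one agent at a time makes this precondition available throughout, so the proof reduces to $n_a$ successive applications of the lemma starting from the empty partition.
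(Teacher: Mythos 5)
Your proposal is correct and follows essentially the same route as the paper: induction on the number of agents with Lemma~\ref{lemma_1} supplying the inductive step, and a singleton base case in which the lone agent picks its most-preferred pair $(t_j,1)$. The only cosmetic difference is that you phrase the step as ``remove an agent, apply the hypothesis, re-add it via the lemma,'' whereas the paper phrases it as ``add a new agent to a $k$-agent instance''; these are the same induction.
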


\begin{proof}
This theorem will be proved by induction. 
Let $M(n)$ be the following mathematical statement: {\colg for} $|\mathcal{A}| = n$, if an instance ${ (\mathcal{A},\mathcal{T},\mathcal{P}) }$ of GRAPE is SPAO, then there exists a Nash stable partition.  

\emph{Base case}: When ${n=1}$, there is only one agent in an instance. This agent is allowed to participate in its most preferred coalition, and the resultant partition is Nash stable.
Therefore, $M(1)$ is true. 

\emph{Induction hypothesis}: Assume that ${M(k)}$ is true for a positive integer $k$ such that ${|\mathcal{A}|=k}$. 

\emph{Induction step}: Suppose that a new agent ${a_i \notin \mathcal{A}}$ whose preference relation regarding every task in $\mathcal{T}$ is SPAO joins the instance ${  (\mathcal{A},\mathcal{T},\mathcal{P})  }$. 
This induces a new instance ${ (\tilde{\mathcal{A}},\mathcal{T},\mathcal{P}) }$ where ${\tilde{\mathcal{A}} = \mathcal{A} \cup \{a_i\} }$ and ${ |\tilde{\mathcal{A}}|=k+1}$. 
From Lemma \ref{lemma_1}, it is clear that the new instance also satisfies SPAO and has a Nash stable partition ${ \tilde{\Pi} }$. 
Consequently, ${M(k+1)}$ is true. 
By mathematical induction, ${M(n)}$ is true for all positive integers $n \ge 1$. 
\end{proof}


\begin{theorem}[\emph{Convergence}]
\label{Nash_poly}
If ${ (\mathcal{A},\mathcal{T},\mathcal{P}) }$ is an instance of GRAPE holding SPAO, 
then the number of iterations required to determine a Nash stable partition is at most ${ |\mathcal{A}|\cdot(|\mathcal{A}|+1)/2}$. 
\end{theorem}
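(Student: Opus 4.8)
The plan is to prove this by induction on the number of agents, using Lemma~\ref{lemma_1} as the engine and reducing the entire statement to summing an arithmetic series. The key observation is that Lemma~\ref{lemma_1} already bounds the cost of inserting a single agent into an established Nash stable partition: adding the $m$-th agent to a Nash stable instance on $m-1$ agents re-converges within at most $m$ iterations, since $|\tilde{\mathcal{A}}| = m$ at that stage.

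First I would set up a sequential-insertion process. Begin with the empty instance and introduce the agents one at a time as $a_1, a_2, \dots, a_{|\mathcal{A}|}$. After $a_1$ is introduced it simply joins its most preferred coalition, giving a Nash stable partition in a single iteration; this is the base case, identical to the base case of Theorem~\ref{NASH}. Inductively, suppose that after the first $m-1$ agents have been processed we have reached a Nash stable partition on $\mathcal{A}_{m-1} = \{a_1,\dots,a_{m-1}\}$. Introducing $a_m$ yields the instance on $\mathcal{A}_m$ with $|\mathcal{A}_m| = m$; because every agent holds SPAO, Lemma~\ref{lemma_1} applies and guarantees re-convergence to a Nash stable partition within at most $m$ iterations.

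Summing the per-insertion bounds over all agents then gives a total of at most $\sum_{m=1}^{|\mathcal{A}|} m = |\mathcal{A}|\cdot(|\mathcal{A}|+1)/2$ iterations, which is exactly the claimed bound. Since each completed insertion stage terminates at a partition from which no agent wishes to deviate, the constructed sequence of single-agent decisions is a valid run that determines a Nash stable partition within the stated number of iterations, establishing the upper bound constructively.

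I do not expect a genuine obstacle here: once Lemma~\ref{lemma_1} is in hand, the theorem is essentially an arithmetic corollary. The only point requiring care is the bookkeeping of what counts as an \emph{iteration} (Definition~\ref{def:iteration}) --- I would make explicit that the iteration in which $a_m$ first chooses a coalition is itself included in the $m$-iteration budget supplied by Lemma~\ref{lemma_1}, so that no iterations are double-counted or omitted across the $|\mathcal{A}|$ insertion stages. With that accounting fixed, the summation is immediate.
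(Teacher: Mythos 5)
Your proof is correct and follows essentially the same route as the paper's: sequential insertion of agents, Lemma~\ref{lemma_1} applied at each insertion to bound the re-convergence cost by the current number of agents, and the triangular sum $\sum_{k=1}^{|\mathcal{A}|} k = |\mathcal{A}|(|\mathcal{A}|+1)/2$. The one piece of the paper's proof you omit is the argument that the bound survives when the agents start from an arbitrary (e.g., random) initial partition rather than being introduced one at a time: the paper handles this by ``liberating'' agents one by one from a frozen initial partition and showing that each newly liberated agent plays the role of the new agent of Lemma~\ref{lemma_1} with respect to the already-free agents, so the same summation applies. Since the decentralized algorithm does not actually execute your sequential-insertion schedule, your existence-of-a-good-run formulation covers the theorem as literally stated but stops short of the slightly stronger claim the paper needs to bound the running time of Algorithm~\ref{algorithm}.
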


\begin{proof}
Suppose that, given a Nash stable partition in an instance where there exists only one agent, 
we add another arbitrary agent and find a Nash stable partition for this new instance, 
and repeat the procedure until all the agents in $\mathcal{A}$ are included. 
From Lemma \ref{lemma_1}, if a new agent joins an instance 
in which the current partition is Nash stable, then the maximum number of iterations required to find a new Nash stable partition is the number of the existing agents plus {\colg one}. 
Therefore, it is trivial that the maximum number of iterations to find a Nash stable partition of an instance ${ (\mathcal{A},\mathcal{T},\mathcal{P}) }$ is given as \begin{equation}
{\sum_{k=1}^{|\mathcal{A}|} k = |\mathcal{A}| \cdot (|\mathcal{A}|+1)/2}.
\end{equation}

Note that this {\colg polynomial-time convergence still holds} even if the agents are {\col initialize}d {\colg to} a random partition.
Suppose that we have the following setting: the entire agents $\mathcal{A}$ are firstly not movable from the existing partition, except a set of free agents $\mathcal{A'} \subseteq \mathcal{A}$; whenever the agents $\mathcal{A}'$ find a Nash stable partition $\Pi'$, one arbitrary agent in $a_r \in \mathcal{A} \setminus \mathcal{A}'$ additionally becomes liberated and deviates from the current coalition $S_{\Pi'(r)}$ to another coalition in $\Pi'$. 
In this setting, {\colg from} the viewpoint of the agents in $\mathcal{A}' \setminus S_{\Pi'(r)}$, the newly liberated agent is considered as a new agent as that in Lemma \ref{lemma_1}. 
Accordingly, we can still {\col utilize} the lemma for the agents in $\mathcal{A}' \setminus S_{\Pi'(r)} \cup \{a_r\}$. 
The agents also can find a Nash stable partition if one of them moves to $S_{\Pi'(r)}$ during the process, because, due to $a_r$, it became $\Delta_{\Pi'(i)} \ge 1$ for every agent $a_i \in S_{\Pi'(r)} \setminus \{a_r\}$. 
In a nutshell, the agents $\mathcal{A}' \cup \{a_r\}$ can converge to a Nash stable partition within $|\mathcal{A}' \cup \{a_r\}|$, {\colg which is equivalent to} Lemma \ref{lemma_1}. {\colg H}ence{\colg,} Theorem \ref{NASH} and this theorem are also valid for the case when the initial partition of the agents are randomly given. 
\end{proof}


\subsection{{\col Decentralized} Algorithm}\label{sec:algorithm}

In the previous section, it {\colg was} assumed that only one agent is somehow chosen to make a decision at each iteration {\colg under the fully-connected network}. 
On the contrary, in this section, we propose a {\col decentralized} algorithm, as shown in Algorithm \ref{algorithm}, in which every agent does decision making {\colg based its local information} and affects its {\col neighbors} simultaneously {\colg under a strongly-connected network.} 
Despite that, we show that {\colg Theorems \ref{NASH} and \ref{Nash_poly} still hold} thanks to our proposed distributed mutex subroutine shown in Algorithm \ref{algorithm:async}. 
The details of the {\col decentralized} main algorithm are as follows. 

\begin{algorithm}
\caption{Decision-making algorithm for each agent $a_i$}\label{algorithm}
\begin{algorithmic}[1]
\Statex \emph{// Initialisation}
\State $\mathsf{satisfied} \leftarrow false$; $r^i \leftarrow 0$; $s^i \leftarrow 0$ \label{line:local_variable_1}
\State $\Pi^i \leftarrow \{S_{\phi} = \mathcal{A}, S_j = \phi \ \forall t_j \in \mathcal{T}\}$\label{line:local_variable_2}
\Statex \emph{// Decision-making process begins}
	\While{$true$} 

	\Statex \emph{// Make a new decision if necessary}
	\If{$\mathsf{satisfied} = false$} \label{line:decision_making}

	\State $(t_{j*},|S_{j*}|) \leftarrow \arg\max_{\forall S_j \in \Pi^i} (t_j, |S_j \cup \{a_i\}|)$ \label{line:choose_best}
	\If{$(t_{j*},|S_{j*}|) \succ_i (t_{\Pi^i (i)},|S_{\Pi^i (i)}|)$} \label{line:decision_1}
	\State Join $S_{j*}$ and update $\Pi^i$
	\State $r^i \leftarrow r^i + 1$ \label{line:iteration_increase}
	\State $s^i \in \mathrm{unif}[0,1]$
	\EndIf \label{line:decision_2}
	\State $\mathsf{satisfied} = true$ \label{line:satisfied}
	\EndIf \label{line:decision_making2}
	
	\Statex \emph{// Broadcast the local information to {\col neighbor} agents}	\
	\State Broadcast $M^i = \{r^i, s^i, \Pi^i\}$ and receive	 $M^k$ from its\label{line:communication}
	\Statex \quad \ \ {\col neighbor}s $\forall a_k \in \mathcal{N}_i$ 

	\Statex \emph{// Select the valid partition from all the received messages}
	\State Construct $\mathcal{M}^i_{rcv}=\{M^i, \forall M^k\}$ 
	\State $\{r^i, s^i, \Pi^i\}, \mathsf{satisfied} \leftarrow$ \Call{D-Mutex}{$\mathcal{M}^i_{rcv}$}\label{line:decision_making_end}

	\EndWhile

\end{algorithmic}
\end{algorithm}

Each agent $a_i$ has local variables such as $\Pi^i$, $\mathsf{satisfied}$, $r^i$, and $s^i$ (Line \ref{line:local_variable_1}--\ref{line:local_variable_2}). 
Here, $\Pi^i$ is the agent's locally-known partition; 
$\mathsf{satisfied}$ is a binary variable that indicates whether or not the agent {\colb is satisfied} with $\Pi^i$ {\colb such that it does not want to deviate from its current coalition}; 
$r^i \in \mathbb{Z}^+$ is an integer variable to represent how many times $\Pi^i$ has evolved (i.e., the number of iterations happened for updating $\Pi^i$ until that moment); 
and $s^i \in [0,1]$ is a uniform-random variable that is generated when{\colg ever} $\Pi^i$ is newly updated (i.e., a random time stamp). 
Given $\Pi^i$, 
agent $a_i$ examines which coalition is the most preferred {\col among} others, assuming that {\colg the} other agents remain at the existing {\colg coalitions} (Line \ref{line:choose_best}). 
Then, the agent joins the newly found coalition if it is strongly preferred than {\colg its current} coalition. 
In this case, the agent updates $\Pi^i$ to reflect its new decision, increases $r^i$, and generates a new random time stamp $s^i$ (Line \ref{line:decision_1}--\ref{line:decision_2}).   
In any case, since the agent ascertained that the currently-selected coalition is the most preferred, 
the agent becomes satisfied with $\Pi^i$ (Line \ref{line:satisfied}). 
Then, agent $a_i$ generates {\colg and sends} a message $M^i := \{r^i, s^i, \Pi^i\}$ to {\colg its} {\col neighbor} agents, and vice versa (Line \ref{line:communication}).

Since every agent locally updates its locally-known partition simultaneously, 
one of the partitions should be regarded as if it were the partition updated by a deciding agent at the previous iteration. 
We refer to this partition as \emph{the valid partition} at the iteration.  
The distributed mutex subroutine in Algorithm \ref{algorithm:async} enables the agents to {\col recognize} the valid partition {\col among} all the locally-known current partitions even under a strongly-connected network and in asynchronous environments. 
Before executing this subroutine, each agent $a_i$ collects all the messages received from its {\col neighbor} agents ${\colg \forall a_k} \in \mathcal{N}_i$ (including $M^i$) as $\mathcal{M}^i_{rcv}= \{M^i, \forall M^k\}$.
Using this message set, the agent examines whether or not its own partition $\Pi^i$ is valid. 
If there exists any other partition $\Pi^k$ such that $r^k > r^i$, then the agent considers $\Pi^k$ more valid than $\Pi^i$. 
This also happens if {\colb $r^k = r^i$ and $s^k > s^i$, which indicates the case where $\Pi^k$ and $\Pi^i$ have evolved over the same amount of times, but the former has a higher time stamp}. 
Since $\Pi^k$ is considered as more valid, agent $a_i$ {\colg will need} to re-examine if there is a more preferred coalition given $\Pi^k$ in the next iteration. Thus, the agent sets $\mathsf{satisfied}$ as $false$ (Line \ref{alg_mutex:line1}--\ref{alg_mutex:line2} in Algorithm \ref{algorithm:async}). 
After {\colg completing this subroutine}, depending on $\mathsf{satisfied}$, each agent proceeds the decision-making process again (i.e., Line \ref{line:decision_making}--\ref{line:decision_making2} in Algorithm \ref{algorithm}) and/or just broadcasts the existing locally-known partition to its {\col neighbor} agents (Line \ref{line:communication} in Algorithm \ref{algorithm}).

\begin{algorithm}
\caption{Distributed Mutex Subroutine}\label{algorithm:async}
\begin{algorithmic}[1]

\Function{D-Mutex}{$\mathcal{M}^i_{rcv}$}
	\State $\mathsf{satisfied} \leftarrow true$
	\For{each message $M_k \in \mathcal{M}^i_{rcv}$} \label{alg_mutex:line1}
	\If{{\colg ($r^k > r^i$) or ($r^k = r^i$ \& $s^k > s^i$)}} 
		\State $r^i \leftarrow r^k$
		\State $s^i \leftarrow s^k$
		\State $\Pi^i \leftarrow \Pi^k$
		\State $\mathsf{satisfied} \leftarrow false$
	\EndIf 
	\EndFor \label{alg_mutex:line2}	
	\State \Return $\{r^i, s^i, \Pi^i\}$, $\mathsf{satisfied}$
\EndFunction
\end{algorithmic}
\end{algorithm}


In a nutshell, the distributed mutex algorithm makes sure that there is only one valid partition that dominates (or will finally dominate depending on the communication network) any other partitions. 
In other words, 
multiple partitions locally evolve, {\colg but one} of them only eventually survive {\colg as long as a strongly-connected network is given.
From each partition's viewpoint, it can be regarded as being evolved by a random sequence of the agents under the fully-connected network. 
Thus, the partition becomes Nash stable within the polynomial time as shown in Theorem \ref{Nash_poly}.} 
In an extreme case, we may encounter multiple Nash stable partitions {\colg at} the very last. 
Nevertheless, thanks to the mutex algorithm, one of them can be distributedly selected by the agents. 
{\colg All the features} imply that agents using Algorithm \ref{algorithm} can find a Nash stable partition in a {\col decentralized} manner {\colg and Theorems \ref{NASH} and \ref{Nash_poly} still hold.}

\section{Analysis}\label{Analysis}

%
%


\subsection{Algorithmic Complexity (Scalability)}\label{sec:algorithm_complexity}
Firstly, let us discuss about the running time for the proposed framework to find a Nash stable partition. 
This paper refers to a unit time required for each agent to proceed the main loop of Algorithm \ref{algorithm} (Line \ref{line:decision_making}-\ref{line:decision_making_end}) as a \emph{time step}. 
Depending on the communication network considered, especially if it is not fully-connected,
it may be possible that {\colg some of} the given agents have to execute this loop to just propagate {\colg their locally-known} partition information without affecting $r_i$ as Line \ref{line:iteration_increase}. 
Because this process also spends a unit time step, we call it as \emph{dummy iteration} to distinguish from a \emph{(normal) iteration}, which increases $r_i$.  

Notice that such dummy iterations happen {\colg sequentially} at most $d_G$ times before a normal iteration occurs, where $d_G$ is the graph diameter of the communication network. 
Hence, thanks to Theorem \ref{Nash_poly}, the total required time steps until finding a Nash stable partition is $O(d_G n_a^2)$. 
For the fully-connected network case, it becomes $O(n_a^2)$ because of $d_G = 1$. 
Note that this algorithmic complexity is less than that of the {\col centralized} algorithm, i.e., $O(n_a^2 \cdot n_t)$, in \cite{Darmann2015}.

Every agent at each iteration investigates ${n_t+1}$ of selectable task-coalition pairs including {\colg $t_{\phi}$} given a locally-known valid partition (as shown in Line \ref{line:choose_best} in Algorithm \ref{algorithm}). 
Therefore, the computational overhead for an agent is ${O(n_t)}$ per any iteration.  
With consideration of the total required time steps, 
the running time of the proposed approach for an agent can be bounded by ${O(d_G n_t n_a^2)}$. 
Note that the running time in practice can be much less than the bound since Theorem \ref{Nash_poly} was conservatively {\col analyze}d, as described in the following remark.

\begin{remark}[\emph{The number of required iterations in practice}]\label{remark:num_iter_practice}
Algorithm \ref{algorithm} allows the entire agents in $\mathcal{A}$ to {\colg be involved in} the decision-making process, 
whereas, in the proof for Theorem \ref{Nash_poly}, a new agent can be involved after a Nash stable partition of existing agents is found. 
Since agents using Algorithm \ref{algorithm} do not need to find every Nash stable partition for each subset of the agents, 
unnecessary iterations can be reduced. 
Hence, the number of required iterations in practice may become less than that shown in Theorem \ref{Nash_poly}, 
which is also supported by the experimental results in Section {\colg \ref{sec:result_scalability}}.
\end{remark}

{\col
Let us now discuss about the communication overhead for each agent per iteration. 
Given a network, agent $a_i$ should communicate with $|\mathcal{N}_i|$ of its neighbors, and the size of each message grows with regard to $n_a$.
Hence, the communication overhead of the agent is $O(|\mathcal{N}_i| \cdot  n_a )$.   
It could be quadratic if $|\mathcal{N}_i|$ increases in proportional to $n_a$. 
However, this would rarely happen in practice due to spatial distribution of agents and physical limits on communication such as range limitation.
Instead, $|\mathcal{N}_i|$ would be most likely saturated in practice.

\begin{remark}[\emph{Communication overhead vs. Running time}]\label{remark:comm_overhead}
To reduce the communication overhead, we may impose \emph{the maximum number of transactions per iteration}, denoted by $n_c$, on each agent. 
Even so, Theorems \ref{NASH} and \ref{Nash_poly} are still valid as long as the union of underlying graphs of the communication networks over time intervals becomes connected. 
However, in return, the number of dummy iterations may increase, so does the framework's running time. 
In an extreme case where $n_c = 1$ (i.e., unicast mode), dummy iterations may happen in a row at most $n_a$ times. 
Thus, the total required time steps until finding a Nash stable partition could be $O(n_a^3)$, whereas the communication overhead is $O(n_a)$. 
In short, the running time of the framework can be traded off against the communication overhead for each agent per iteration. 
\end{remark}
}

\subsection{Suboptimality}\label{sec:suboptimality}

This section investigates the \emph{suboptimality lower bound} (or can be called \emph{approximation ratio}) of the proposed framework in terms of the global utility, i.e., the objective function in Equation (\ref{eqn:obj_ftn}). 
Given a partition $\Pi$, the global utility value can be equivalently rewritten as
\begin{equation}
\label{Objective_function}
J = \sum_{\forall a_i \in \mathcal{A}} u_i(t_{\Pi(i)},|S_{\Pi(i)}|).
\end{equation}
Note that we can simply derive $\{x_{ij}\}$ for Equation (\ref{eqn:obj_ftn}) from $\Pi$ for Equation (\ref{Objective_function}), and vice versa. 
Let ${J_{GRAPE}}$ and $J_{OPT}$ represent 
the global utility of a Nash stable partition obtained by the proposed framework and the optimal value, respectively.  
This paper refers to the fraction of ${J_{GRAPE}}$ with respect to $J_{OPT}$ as the \emph{suboptimality} of GRAPE, denoted by $\alpha$, i.e.,  
\begin{equation}\label{eqn:approx_ratio}
\alpha :=  J_{GRAPE}/J_{OPT}.
\end{equation}

The lower bound of the suboptimality can be determined by the following theorem. 
\begin{theorem}[\emph{Suboptimality lower bound: general case}] 
\label{OPT_Bound}
Given a Nash stable partition ${\Pi}$ obtained by GRAPE, its suboptimality in terms of the global utility is lower bounded as follows:
\begin{equation}
\label{Eq_OPT_1}
\alpha \ge J_{GRAPE}/(J_{GRAPE}+\lambda),
\end{equation}
where 
\begin{equation}\label{eqn:lambda}
\lambda  \equiv  \sum_{\forall S_j \in \Pi} \max_{a_i \in \mathcal{A}, p \le |\mathcal{A}|} \big\{ p \cdot \big[ u_i(t_j,p) - u_i(t_j,|S_j \cup \{a_i\}|) \big] \big\} 
\end{equation}
\end{theorem}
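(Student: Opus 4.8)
The plan is to convert the claimed multiplicative bound on $\alpha$ into an additive bound on the optimality gap, and then to control that gap coalition-by-coalition using Nash stability together with a counting argument. Since $\alpha = J_{GRAPE}/J_{OPT}$ by definition (Equation (\ref{eqn:approx_ratio})), and since each summand in Equation (\ref{eqn:lambda}) is nonnegative (for any fixed $a_i$ one may choose $p = |S_j \cup \{a_i\}| \le |\mathcal{A}|$, which makes the bracket vanish, so each maximand is at least $0$ and hence $\lambda \ge 0$), the inequality $\alpha \ge J_{GRAPE}/(J_{GRAPE}+\lambda)$ is equivalent, after cross-multiplication, to the additive statement
\begin{equation*}
J_{OPT} - J_{GRAPE} \le \lambda .
\end{equation*}
The cross-multiplication is legitimate because Nash stability (Definition \ref{Nash_stable}), applied with the void task $S_\phi$ as deviation target, forces $u_i(t_{\Pi(i)},|S_{\Pi(i)}|) \ge 0$ for every agent, so (in the non-degenerate case) $J_{GRAPE} > 0$ and $J_{OPT} \ge J_{GRAPE} > 0$. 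The whole problem thus reduces to bounding the optimality gap by $\lambda$.

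First I would let $\Pi^*$ denote an optimal partition with coalitions $\{S_j^*\}$ and, using Equation (\ref{Objective_function}), write the gap as a single sum over agents,
\begin{equation*}
J_{OPT} - J_{GRAPE} = \sum_{a_i \in \mathcal{A}} \big[ u_i(t_{\Pi^*(i)}, |S^*_{\Pi^*(i)}|) - u_i(t_{\Pi(i)}, |S_{\Pi(i)}|) \big].
\end{equation*}
Next I would invoke Nash stability of $\Pi$, choosing the deviation target to be each agent's \emph{optimal} task $t_{\Pi^*(i)}$; this yields $u_i(t_{\Pi(i)},|S_{\Pi(i)}|) \ge u_i(t_{\Pi^*(i)}, |S_{\Pi^*(i)} \cup \{a_i\}|)$, where $S_{\Pi^*(i)}$ is the \emph{GRAPE} coalition for that task. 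Substituting this lower bound for the subtracted term gives
\begin{equation*}
J_{OPT} - J_{GRAPE} \le \sum_{a_i \in \mathcal{A}} \big[ u_i(t_{\Pi^*(i)}, |S^*_{\Pi^*(i)}|) - u_i(t_{\Pi^*(i)}, |S_{\Pi^*(i)} \cup \{a_i\}|) \big].
\end{equation*}

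Then I would regroup this sum by the optimal task index, writing it as $\sum_j \sum_{a_i \in S_j^*}[\,\cdot\,]$. The key observation is that the inner sum over $S_j^*$ contains exactly $p := |S_j^*|$ terms, each comparing $u_i(t_j,p)$ against $u_i(t_j,|S_j \cup \{a_i\}|)$, the latter governed by the GRAPE coalition $S_j$. Bounding every term by the agent-wise maximum and multiplying by the count $p$ gives $p\cdot\max_{a_i}[u_i(t_j,p)-u_i(t_j,|S_j\cup\{a_i\}|)]$; since $p = |S_j^*| \le |\mathcal{A}|$ is an admissible choice in the inner maximization of Equation (\ref{eqn:lambda}), this is at most the corresponding $\lambda$-summand for $S_j$. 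Summing over all $j$ (the void task contributes nothing, as its utility is identically zero, and empty optimal coalitions contribute nothing against a nonnegative $\lambda$-summand) recovers exactly $\lambda$.

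The step I expect to be the main obstacle is this regrouping-and-counting move: the factor $p$ in Equation (\ref{eqn:lambda}) is not incidental but arises precisely because one must absorb $|S_j^*|$ separate agent contributions into a single per-coalition maximum. The bookkeeping must keep straight that the size $p=|S_j^*|$ entering $u_i(t_j,p)$ comes from the \emph{optimal} partition, whereas the offset $|S_j \cup \{a_i\}|$ comes from the \emph{GRAPE} partition, since it is Nash stability of the latter that supplies the inequality. The degenerate cases (empty optimal coalitions, the void task, and $J_{GRAPE}=0$) are harmless and I would dispatch them with a brief remark rather than separate argument.
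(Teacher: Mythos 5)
Your proposal is correct and follows essentially the same route as the paper's proof: apply Nash stability with each agent's optimal task as the deviation target, regroup the resulting gap by optimal coalitions so the factor $p=|S_j^*|$ emerges from counting, bound each inner term by the per-coalition maximum, and relax $p=|S_j^*|$ to the admissible range $p\le|\mathcal{A}|$ to obtain $J_{OPT}-J_{GRAPE}\le\lambda$. Your explicit remarks on why $\lambda\ge 0$ and why the cross-multiplication is legitimate are details the paper glosses over, but they do not change the argument.
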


\begin{proof}
Let ${\Pi^{*}}$ denote the optimal partition for the objective function in Equation (\ref{Objective_function}). 
Given a Nash stable partition $\Pi$, from Definition \ref{Nash_stable}, it holds that, $\forall a_i \in \mathcal{A}$,
\begin{equation}
\label{Eq_OPT_2}
	u_i (t_{\Pi(i)},|S_{\Pi(i)}|)  \ge  u_i ( t^{*}_{j \gets i} , | S_{j} \cup \{a_i\} | ) , 
\end{equation}
where ${t^{*}_{j \gets i}}$ indicates task ${t_j \in \mathcal{T}}$ to which agent ${a_i}$ should have joined according to the optimal partition $\Pi^*$; and 
${S_{j} \in \Pi}$ is the coalition for task ${t_j}$ whose participants follow the Nash stable partition $\Pi$. 

The right-hand side of the inequality in Equation (\ref{Eq_OPT_2}) can be rewritten as
\begin{equation}
\begin{split}
	u_i ( & t^{*}_{j \gets i} ,  | S_{j} \cup \{a_i\} | )  =  u_i ( t^{*}_{j \gets i} , | S^{*}_{j}| )  - \\
	&  \big\{ u_i ( t^{*}_{j \gets i} , | S^{*}_{j}| ) - u_i ( t^{*}_{j \gets i} , | S_{j} \cup \{a_i\} | )  \big\},  
	\label{Eq_OPT_3}
\end{split}
\end{equation}
where ${S^{*}_{j} \in \Pi^{*}}$ is the ideal coalition of task ${t^{*}_{j \gets i}}$ that {\col maximize}s the objective function.  

By summing {\colg over} all the agents, the inequality in Equation (\ref{Eq_OPT_2}) can be said that  
\begin{IEEEeqnarray}{rCl}
	\IEEEeqnarraymulticol{3}{l}{
	\sum_{\forall a_i \in \mathcal{A}} u_i (t_{{\colg \Pi(i)}},|S_{{\colg \Pi(i)}}|) 
	}\nonumber \\ \quad\quad
	& \ge & \sum_{\forall a_i \in \mathcal{A}} u_i ( t^{*}_{j \gets i} , | S^{*}_{j}| ) \nonumber \\ && - \sum_{\forall a_i \in \mathcal{A}} \{ u_i ( t^{*}_{j \gets i} , | S^{*}_{j}| ) - u_i ( t^{*}_{j \gets i} , | S_{j} \cup \{a_i\} | )  \}.  
	\nonumber\\*
	\label{Eq_OPT_4}
\end{IEEEeqnarray}
The left-hand side of the inequality in Equation (\ref{Eq_OPT_4}) represents the objective function value of the Nash stable partition $\Pi$, i.e., $J_{GRAPE}$, 
and the first term of the right-hand side is the optimal value, i.e., $J_{OPT}$. 
The second term in the right-hand side can be interpreted as the summation of the utility lost of each agent caused by the belated decision to its optimal task, provided that {\colg the} other agents still follow the Nash stable partition.

The upper bound of the second term is given by
\begin{equation}
\sum_{j=1}^{{\colg n_t}} | S^{*}_{j}| \cdot \max_{a_i \in S^{*}_j} \{  u_i ( t^{*}_{j \gets i} , | S^{*}_{j}| ) - u_i ( t^{*}_{j \gets i} , | S_{j} \cup \{a_i\} | ) \}.  
\label{Eq_OPT_5}
\end{equation}
%
%
This is at most
\begin{equation}
	\sum_{\forall S_j \in \Pi} \max_{a_i \in \mathcal{A}, p \le |\mathcal{A}|} L_{ij}[p]	\equiv \lambda, 
	\label{Eq_OPT_6}
\end{equation}
where $L_{ij}[p] = p \cdot ( u_i ( t_{j} , p ) - u_i ( t_{j} , | S_{j} \cup \{a_i\} | ) )$.

Hence, the inequality in Eqn (\ref{Eq_OPT_4}) can be rewritten as 
\begin{equation*}
J_{GRAPE} \ge J_{OPT} - \lambda.
\end{equation*}
Dividing both sides by $J_{GRAPE}$ and rearranging them yield the suboptimality lower bound of the Nash stable partition, as given by Equation (\ref{Eq_OPT_1}).
\end{proof}

Although Theorem \ref{OPT_Bound} does not provide a fixed-value lower bound, it can be determined as long as a Nash stable partition and agents' individual utility functions are given.
Nevertheless, as a special case, if the social utility for any coalition is non-decreasing (or monotonically increasing) in terms of the number of co-working agents, 
then we can obtain a fixed-value lower bound for the suboptimality of a Nash stable partition.

\begin{theorem}[\emph{Suboptimality lower bound: a special case}]
\label{optimality}
Given an instance $(\mathcal{A},\mathcal{T},\mathcal{P})$ of GRAPE, 
if 
(i) the social utility for any coalition is non-decreasing with regard to the number of participants,
i.e., for any $S_j \subseteq \mathcal{A}$ and $a_l \in \mathcal{A}\setminus S_j$, it holds that 
\begin{equation*}
\sum_{\forall a_i \in S_j} u_i(t_j, |S_j|) \le \sum_{\forall a_i \in S_j \cup \{a_l\}} u_i(t_j, |S_j \cup \{a_l\}|), 
\end{equation*} 
and 
(ii) all the individual utilities can derive SPAO preference relations,
then a Nash stable partition $\Pi$ obtained by GRAPE provide{\colg s at least} $50\%$ of suboptimality in terms of the global utility. 
\end{theorem}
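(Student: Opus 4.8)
The plan is to combine the general bound of Theorem~\ref{OPT_Bound} with the two extra hypotheses to force the optimality gap to be at most $J_{GRAPE}$, which immediately yields $\alpha = J_{GRAPE}/J_{OPT} \ge 1/2$. Rather than feed the \emph{relaxed} quantity $\lambda$ of \eqnref{eqn:lambda} into \eqnref{Eq_OPT_1} --- a term-wise bound on $\lambda$ is awkward, since the inner maximization there ranges over an unconstrained coalition size $p$ --- I would return to the tighter inequality obtained \emph{en route} to Theorem~\ref{OPT_Bound}, namely \eqnref{Eq_OPT_4}, which states $J_{GRAPE} \ge J_{OPT} - D$ with the loss term $D = \sum_{\forall a_i \in \mathcal{A}} \big[ u_i(t^{*}_{j \gets i}, |S^{*}_{j}|) - u_i(t^{*}_{j \gets i}, |S_j \cup \{a_i\}|) \big]$. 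Grouping the agents by their optimal task, $D = \sum_{S^{*}_{j} \in \Pi^{*}} D_j$ with $D_j = \sum_{a_i \in S^{*}_{j}} \big[ u_i(t_j, |S^{*}_{j}|) - u_i(t_j, |S_j \cup \{a_i\}|) \big]$. It then suffices to show $D \le J_{GRAPE}$.

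I would establish this task-by-task, proving $D_j \le U(S_j)$ for every $t_j$, where $U(S_j) := \sum_{a_i \in S_j} u_i(t_j, |S_j|)$ is the social utility of the Nash coalition $S_j$; summing over tasks (the void coalition contributing $0$) then gives $D \le J_{GRAPE} = \sum_{S_j \in \Pi} U(S_j)$. Write $N := |S_j \cup S^{*}_{j}|$. The first move uses SPAO (hypothesis (ii)): every $a_i \in S^{*}_{j}$ satisfies $|S_j \cup \{a_i\}| \le N$, so $u_i(t_j, |S_j \cup \{a_i\}|) \ge u_i(t_j, N)$, whence $D_j \le \sum_{a_i \in S^{*}_{j}} u_i(t_j, |S^{*}_{j}|) - \sum_{a_i \in S^{*}_{j}} u_i(t_j, N)$. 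The second move is the bridge from hypothesis (i) to individual utilities: monotonicity of the social utility gives $\sum_{a_i \in S^{*}_{j}} u_i(t_j, |S^{*}_{j}|) \le \sum_{a_i \in S_j \cup S^{*}_{j}} u_i(t_j, N)$, and splitting the right-hand side along the disjoint union $S_j \cup S^{*}_{j} = S^{*}_{j} \sqcup (S_j \setminus S^{*}_{j})$ cancels the $S^{*}_{j}$ contributions, leaving $D_j \le \sum_{a_i \in S_j \setminus S^{*}_{j}} u_i(t_j, N)$. Finally, SPAO again (now $N \ge |S_j|$) together with the non-negativity $u_i(t_j, |S_j|) \ge 0$ for $a_i \in S_j$ --- which holds because Nash stability (Definition~\ref{Nash_stable}) makes every member of $S_j$ weakly prefer its coalition to the void task $t_\phi$, whose individual utility is zero --- upgrades this to $D_j \le \sum_{a_i \in S_j} u_i(t_j, |S_j|) = U(S_j)$.

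The step I expect to be the crux is the second move: hypothesis (i) is a statement about \emph{sums} of utilities, whereas $D_j$ is assembled from \emph{individual} utility differences, so the whole argument hinges on selecting the intermediate coalition $S_j \cup S^{*}_{j}$ precisely so that the optimal-coalition contributions cancel, and on arranging the coalition sizes so that every SPAO inequality points the right way ($|S_j \cup \{a_i\}| \le N$ and $N \ge |S_j|$). The supporting sign facts --- the two SPAO size comparisons and the Nash/void non-negativity of in-coalition utilities --- are each elementary but must all hold at once, and the potential overlap between $S_j$ and $S^{*}_{j}$ is the easy-to-miss detail, which the disjoint splitting above handles without assuming the two coalitions are disjoint. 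I would also flag that this route, using \eqnref{Eq_OPT_4} directly, is tighter than (and therefore subsumes) any attempt to bound the relaxed $\lambda$, so invoking the intermediate inequality rather than the final form of Theorem~\ref{OPT_Bound} is what makes the $50\%$ guarantee fall out cleanly.
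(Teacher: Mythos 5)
Your proof is correct and is essentially the paper's own argument in different clothing: both compare $J_{OPT}$ and $2J_{GRAPE}$ against the social utility of the merged coalitions $S_j \cup S^{*}_{j}$ (which the paper packages as $V(\hat{\Pi}\oplus\Pi^{*})$), using condition (i) to get $J_{OPT}\le V(\hat{\Pi}\oplus\Pi^{*})$ and SPAO plus Nash stability to get $V(\hat{\Pi}\oplus\Pi^{*})\le 2J_{GRAPE}$. Your routing through the gap term $D$ of Theorem~\ref{OPT_Bound} and your split of the union as $S^{*}_{j}\sqcup(S_j\setminus S^{*}_{j})$ (rather than the paper's $\hat{S}_j\sqcup(S^{*}_{j}\setminus\hat{S}_j)$) are cosmetic, though you do make explicit the non-negativity of in-coalition utilities (via Nash stability against $t_\phi$) that the paper uses only implicitly in its final $\mathcal{A}^{-}\subseteq\mathcal{A}$ step.
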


\begin{proof}
Firstly, we introduce some definitions and notations that facilitate to describe this proof. 
Given a partition $\Pi$ of an instance $(\mathcal{A},\mathcal{T},\mathcal{P})$, the global utility is denoted by 
\begin{equation}\label{eqn:social_util}
\begin{split}
V(\Pi) :=  \sum_{\forall a_i \in \mathcal{A}} u_i(t_{\Pi(i)},|S_{\Pi(i)}|).
\end{split}
\end{equation}

We use operator $\oplus$ as follows. 
Given any two partitions $\Pi^{A} = \{S^A_0, ..., S^A_{n_t}\}$ and $\Pi^{B}  = \{S^B_0, ..., S^B_{n_t}\}$, 
\begin{equation*}
\Pi^{A} \oplus \Pi^{B} := \{S_0^A \cup S_0^B, \ S_1^A \cup S_1^B, ..., \ S_{n_t}^A \cup S_{n_t}^B  \}.
\end{equation*}
{\colg Since} $\cup^{n_t}_{j=0}S^A_j=\cup^{n_t}_{j=0}S^B_j=\mathcal{A}${\colg ,}
there may exist the same agent $a_i$ even in two different coalitions in $\Pi^{A} \oplus \Pi^{B}$. 
For instance, suppose that $\Pi^{A} = \{ \{a_1\}, \{a_2\}, \{a_3\}\}$ and $\Pi^{B} = \{ \emptyset, \{a_1, a_3\}, \{a_2\}\}$.  
Then, $\Pi^{A} \oplus \Pi^{B} = \{ \{a_1\}, \{a_1, a_2, a_3\}, \{a_2, a_3\}\}$. 
We regard such an agent as two different agents in $\Pi^{A} \oplus \Pi^{B}$. 
Accordingly, the operation may increase the number of total agents in the resultant partition. 

Using the definitions described above, condition (i) implies that
\begin{equation}\label{eqn:socialutil_nondec}
V(\Pi^A) \le V(\Pi^A \oplus \Pi^B).
\end{equation}

From now on, we will show that $\frac{1}{2} V(\Pi^*) \le V(\hat{\Pi})$, 
where $\Pi^{*} = \{S^*_0, S^*_1, ...,S^*_{n_t}\}$ is an optimal partition and 
$\hat{\Pi} = \{\hat{S}_0, \hat{S}_1, ...,\hat{S}_{n_t}\}$ is a Nash stable partition. 
By doing so, this theorem can be proved. 
From the definition in Equation (\ref{eqn:social_util}), it can be said that
\begin{equation}\label{eqn:eqn_15}
\begin{split}
V(\hat{\Pi} \oplus \Pi^*)= &  \sum_{\forall {a}_i \in {\mathcal{A}}} u_i(t_{\hat{\Pi}(i)}, |\hat{S}_{\hat{\Pi}(i)} \cup {S}^*_{\hat{\Pi}(i)}|) \\
& + \sum_{\forall {a}_i \in {\mathcal{A}}^{-}} u_i(t_{\Pi^*(i)}, |\hat{S}_{\Pi^*(i)} \cup {S}^*_{\Pi^*(i)}|),
\end{split}
\end{equation}
where $\mathcal{A}^-$ is the set of agents whose decisions follow not the Nash stable partition $\hat{\Pi}$ but only the optimal partition $\Pi^*$. 
Due to condition (ii), 
the first term of the right-hand side in Equation (\ref{eqn:eqn_15}) is no more than 
\begin{equation}
\sum_{\forall {a}_i \in {\mathcal{A}}} u_i(t_{\hat{\Pi}(i)}, |\hat{S}_{\hat{\Pi}(i)}|)  \equiv V(\hat{\Pi}). 
\end{equation}
Likewise, the second term is also at most
\begin{equation}
\sum_{\forall {a}_i \in {\mathcal{A}}^{-}} u_i(t_{\Pi^*(i)}, |\hat{S}_{{\Pi}^*(i)} \cup \{a_i\}|). 
\end{equation}
By the definition of Nash stability (i.e., for every agent $a_i \in \mathcal{A}$, 
$u_i(t_{\hat{\Pi}(i)}, |\hat{S}_{\hat{\Pi}(i)}|) \ge  u_i(t_j, |\hat{S}_j \cup \{a_i\}|)$, $\forall {\hat{S}_j \in \hat{\Pi}}$),  
the above equation is at most
\begin{equation}
\sum_{\forall {a}_i \in {\mathcal{A}}^{-}} u_i(t_{\hat{\Pi}(i)}, |\hat{S}_{\hat{\Pi}(i)}|),
\end{equation}
which is also no more than, because of $\mathcal{A}^- \subseteq \mathcal{A}$, 
\begin{equation}
\sum_{\forall {a}_i \in {\mathcal{A}}} u_i(t_{\hat{\Pi}(i)}, |\hat{S}_{\hat{\Pi}(i)}|) \equiv V(\hat{\Pi}). 
\end{equation}

{\colg Accordingly}, the left-hand side of Equation (\ref{eqn:eqn_15}) holds the following inequality:
\begin{equation}
V(\hat{\Pi} \oplus \Pi^*) \le 2 V(\hat{\Pi}).
\end{equation}
Thanks to Equation (\ref{eqn:socialutil_nondec}), it follows that
\begin{equation*}
V(\Pi^*) \le V(\hat{\Pi} \oplus \Pi^*).
\end{equation*}
Therefore, $V(\Pi^*) \le 2 V(\hat{\Pi})$, which completes this proof. 

\end{proof}

\subsection{Adaptability} 
Our proposed framework is also adaptable {\colg to} dynamic environments {\colg such as unexpected addition or loss of agents or tasks}, owing to its fast convergence to a Nash stable partition. 
Thanks to Lemma \ref{lemma_1}, if a new agent additionally joins an ongoing mission in which an  assignment was already determined, the number of iterations required for converging to a new Nash stable partition is at most the number of the total agents. 
Responding to any environmental change, the framework is able to establish a new agreed task assignment within polynomial time. 

\subsection{{\col Robustness in Asynchronous Environments}}

{\col
In the proposed framework, for every iteration, each agent does not need to wait until nor ensure that its locally-known information has been propagated to a certain neighbor group.
Instead, as described in Remark \ref{remark:comm_overhead}, it is enough for the agent to receive the local information from one of its neighbors, to make a decision, and to send the updated partition back to some of its neighbors. 
Temporary disconnection or non-operation of some agents may cause dummy iterations additionally.  
However, it does not affect the existence of, the convergence toward, and the suboptimality of a Nash stable partition under the proposed framework, which is also supported by Section \ref{sec:result_robust}.
}

\section{GRAPE with Minimum Requirements}\label{sec:min_rqmt}

This section addresses another task allocation problem where each task may require at least a certain number of agents for {\colg its} completion. 
This problem can be defined as follows. 
\begin{problem}\label{prob_minrqmt}
Given a set of agents $\mathcal{A}$ and a set of tasks $\mathcal{T}$, 
the objective is to find an assignment such that 
\begin{equation}\label{eqn:obj_minrqmt}
\max_{\{x_{ij}\}} \sum_{\forall a_i \in \mathcal{A}} \sum_{\forall t_j \in \mathcal{T}} u_{i}(t_j, p) x_{ij}{\colg ,}
\end{equation}
subject to
\begin{equation}\label{eqn:prob_minrqmt_min_rqmt}
\sum_{\forall a_i \in \mathcal{A}} x_{ij} \ge R_j{\colg ,} \quad \forall t_j \in \mathcal{T}{\colg ,}
\end{equation}
\begin{equation}
\sum_{\forall t_j \in \mathcal{T}} x_{ij} \le 1{\colg ,} \quad \forall a_i \in \mathcal{A}{\colg ,}
\end{equation}
\begin{equation}
x_{ij} \in \{0,1\}{\colg ,} \quad \forall (a_i,t_j) \in \mathcal{A} \times \mathcal{T}{\colg ,}
\end{equation}
where $R_j \in {\colg\mathbb{N} \cup \{0\}}$ is the number of minimum required agents for task $t_j$, and all the other variables are identically defined as those in Problem \ref{prob_basic}. 
Here, it is considered that, for $\forall a_i \in \mathcal{A}$ and $\forall t_j \in \mathcal{T}$,
\begin{equation}\label{eqn:no_reward}
u_i (t_j, p) = 0  \quad \text{if $p < R_j$}
\end{equation}
because task $t_j$ cannot be completed in this case.
{\colg Note that any task $t_j$ without such a requirement is regarded to have $R_j = 0$.}
\end{problem}

{\colg For each task $t_j$ having $R_j > 0$,}
even if $u_i(t_j,p)$ is monotonically decreasing at $p \ge R_j$, 
the individual utility can not be simply transformed to a preference relation holding SPAO because of {\colg Equation} (\ref{eqn:no_reward}). 
Thus, we need to modify the utility function to yield alternative values for the case when $p < R_j$. 
We refer to the modified utility as \emph{auxiliary individual utility} $\tilde{u}_i$, which is defined as
\begin{equation}\label{eqn:aux_util}
\tilde{u}_{i}(t_j, p) =  \begin{cases}
{u}^0_{i}(t_j,p) & \text{if $p \le R_j$} \\
{u}_{i}(t_j, p) & \text{otherwise,} 
\end{cases}
\end{equation}
where ${u}^0_{i}(t_j,p)$ is the \emph{dummy utility} of agent $a_i$ with regard to task $t_j$ when $p \le R_j$.

The dummy utility is intentionally used also for the case when $p = R_j$ in order to find an assignment that holds Equation (\ref{eqn:prob_minrqmt_min_rqmt}). 
For this, the auxiliary individual utility should satisfy the following condition.
\begin{condition}\label{con:min_rqmt}
For every agent $a_i \in \mathcal{A}$, its preference relation $\mathcal{P}_i$ holds that, 
for any two tasks $t_j, t_k \in \mathcal{T}$, 
\begin{equation*}
(t_j,R_j) \succ_i (t_k,R_k+1). 
\end{equation*}
This condition enables every agent to prefer a task for which the number of co-working agents is less than its minimum requirement, over any other tasks whose requirements are already fulfilled. 
Under this condition, as long as the agent set $\mathcal{A}$ is such that $|\mathcal{A}| \ge \sum_{\forall t_j \in \mathcal{T}} R_j$ 
and a Nash stable partition is found, 
the resultant assignment satisfies Equation (\ref{eqn:prob_minrqmt_min_rqmt}). 
\end{condition}

\begin{proposition}\label{prop:min_rqmt}
Given an instance of Problem \ref{prob_minrqmt} where $u_i(t_j, p)$ $\forall i$ $\forall j$ is a monotonically decreasing function with regard to $\forall p \ge R_j$, 
if the dummy utilities ${u}^0_{i}(t_j,p)$ $\forall i$ $\forall j$ in (\ref{eqn:aux_util}) are set to satisfy Condition \ref{con:min_rqmt} and SPAO for $\forall p \le R_j$, 
then all the resultant auxiliary individual utilities $\tilde{u}_i(t_j,p)$ $\forall i$ $\forall j$ $\forall p$ can be transformed to a $n_a$-tuple of preference relations $\mathcal{P}$ that hold Condition \ref{con:min_rqmt} as well as SPAO for $\forall p \in \{1,...,n_a\}$. 
In the corresponding instance of GRAPE $(\mathcal{A},\mathcal{T}, \mathcal{P})$, 
a Nash stable partition can be determined within polynomial times as shown in Theorems \ref{NASH} and \ref{Nash_poly} because of SPAO, and the resultant partition can satisfy Equation (\ref{eqn:prob_minrqmt_min_rqmt}) due to Condition \ref{con:min_rqmt}. 
\end{proposition}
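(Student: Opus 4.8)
The plan is to verify the two claims of Proposition~\ref{prop:min_rqmt} separately: first, that the auxiliary individual utilities $\tilde{u}_i(t_j,p)$ induce preference relations that are SPAO over the \emph{entire} range $p \in \{1,\dots,n_a\}$; and second, that these relations additionally satisfy Condition~\ref{con:min_rqmt}. Once both are established, the polynomial-time convergence to a Nash stable partition follows immediately by invoking Theorems~\ref{NASH} and~\ref{Nash_poly}, and the satisfaction of the minimum-requirement constraint~\eqref{eqn:prob_minrqmt_min_rqmt} follows from the argument already sketched in Condition~\ref{con:min_rqmt}.

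First I would establish global SPAO by splitting the domain of $p$ at the threshold $R_j$ and checking monotonic non-increase on each piece and across the junction. On the lower piece $p \le R_j$, the dummy utilities ${u}^0_i(t_j,p)$ are assumed to induce SPAO by hypothesis, so $\tilde{u}_i(t_j,p_1) \ge \tilde{u}_i(t_j,p_2)$ whenever $p_1 < p_2 \le R_j$. On the upper piece $p \ge R_j$, the original utility $u_i(t_j,p)$ is monotonically decreasing by assumption, giving the same ordering there. The only remaining case is a pair straddling the threshold, i.e.\ $p_1 \le R_j < p_2$; here I would chain the two monotonicity facts through the value at $p = R_j$, noting that~\eqref{eqn:aux_util} uses the dummy value at $p = R_j$, so I must check that the dummy value at $R_j$ dominates the original values at $p > R_j$. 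This cross-threshold comparison is where I expect the argument to need the most care, since it is exactly the seam between the two piecewise definitions; I would exploit Condition~\ref{con:min_rqmt} here, because a dummy utility satisfying $(t_j,R_j) \succ_i (t_k,R_k+1)$ for all $t_k$ (in particular $t_k = t_j$) forces $\tilde{u}_i(t_j,R_j) > \tilde{u}_i(t_j, R_j + 1)$, sealing the junction and yielding SPAO on all of $\{1,\dots,n_a\}$.

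Next I would confirm that Condition~\ref{con:min_rqmt} is preserved at the level of the preference relations $\mathcal{P}$: since the dummy utilities are \emph{assumed} to be set so that $(t_j,R_j)\succ_i(t_k,R_k+1)$ for every pair of tasks, and the transformation from utilities to preferences respects the ordering of utility values (as noted after Definition~\ref{game}), the resulting relation inherits the condition directly. With both SPAO and Condition~\ref{con:min_rqmt} in hand for the instance $(\mathcal{A},\mathcal{T},\mathcal{P})$, Theorem~\ref{NASH} guarantees existence and Theorem~\ref{Nash_poly} guarantees convergence within $|\mathcal{A}|\cdot(|\mathcal{A}|+1)/2$ iterations. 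Finally, the feasibility of~\eqref{eqn:prob_minrqmt_min_rqmt} follows because Condition~\ref{con:min_rqmt} makes any under-staffed task (with fewer than $R_j$ agents) strictly more attractive than any already-satisfied task, so no Nash stable partition can leave a requirement unmet whenever $|\mathcal{A}| \ge \sum_{\forall t_j \in \mathcal{T}} R_j$. The main obstacle is genuinely the cross-threshold SPAO check, because the dummy utilities are free parameters constrained only indirectly through Condition~\ref{con:min_rqmt}, so I must be careful to show the two constraints are jointly satisfiable rather than merely each satisfiable in isolation.
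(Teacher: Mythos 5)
The paper gives no explicit proof of Proposition~\ref{prop:min_rqmt}: it is stated as a direct consequence of the construction in Equation~(\ref{eqn:aux_util}), Condition~\ref{con:min_rqmt}, and Theorems~\ref{NASH} and~\ref{Nash_poly}, with only a worked example following. Your proposal is the natural completion of that implicit argument and is essentially correct. In particular, your identification of the seam at $p = R_j$ as the only nontrivial step is exactly right, and your resolution --- instantiating Condition~\ref{con:min_rqmt} with $t_k = t_j$ to get $(t_j,R_j) \succ_i (t_j,R_j+1)$, i.e.\ $u^0_i(t_j,R_j) > u_i(t_j,R_j+1)$ --- is the correct way to glue the two monotone pieces into a globally SPAO relation. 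Your worry about joint satisfiability of the two constraints on the dummy utilities is moot as far as the proposition is concerned (both are hypotheses), though the paper's example with $u^0_i(t_j,p) = \max_{t_j}\{u_i(t_j,R_j+1)\}+\beta$ shows they are simultaneously achievable.

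One residual soft spot, which you inherit from the paper rather than introduce: the feasibility claim for Equation~(\ref{eqn:prob_minrqmt_min_rqmt}) is asserted inside Condition~\ref{con:min_rqmt} itself and never proved. Your one-line justification (``any under-staffed task is strictly more attractive than any already-satisfied task'') covers the case where some other coalition exceeds its requirement, but Condition~\ref{con:min_rqmt} only orders $(t_j,R_j)$ against $(t_k,R_k+1)$; it says nothing about agents sitting in $S_\phi$, nor about a configuration in which every coalition is at or below its requirement and the surplus agents are all idle. Closing that case requires an additional (mild) assumption that the dummy utilities also dominate the zero utility of the void task. If you want your write-up to be fully self-contained, that is the one place where you should add a sentence beyond what the paper provides.
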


Let us give an example. 
Suppose that there exist 100 agents $\mathcal{A}$, and 3 tasks $\mathcal{T} = \{t_1, t_2, t_3\}$ where only $t_3$ has its minimum requirement $R_3 = 5$; 
for every agent $a_i \in \mathcal{A}$, individual utilities for $t_1$ and $t_2$, i.e., $u_i(t_1,p)$ and $u_i({\colb t_2},p)$, are much higher than that for $t_3$ in $\forall p \in \{1,...,100\}$.
We can find a Nash stable partition for this example, as described in Proposition \ref{prop:min_rqmt},
by setting 
${u}^0_{i}(t_j,p) {\colg = \max_{\forall t_j}\{{u}_{i}(t_j,R_j + 1)\} + \beta}$ for $\forall p \le R_j$, {\colg $\forall a_i \in \mathcal{A}$, 
where $\beta > 0$ is an arbitrary positive constant}.

After a Nash stable partition is found, in order to compute the objective function value in (\ref{eqn:obj_minrqmt}), 
the original individual utility function $u_i$ should be used instead of the auxiliary one $\tilde{u}_i$.

\begin{proposition}\label{prop:subbound_minrqmt}
{\colg Given} a Nash stable partition $\Pi$ obtained by implementing Proposition \ref{prop:min_rqmt},  
its suboptimality bound $\alpha$ is such that
\begin{equation}\label{eqn:bound_minrqmt}
\alpha \ge \frac{J_{GRAPE}}{J_{GRAPE} + \tilde{\lambda}} \cdot \frac{J_{GRAPE}}{J_{GRAPE} + \delta}. 
\end{equation}
Here, $\delta \equiv \tilde{J}_{GRAPE} - J_{GRAPE}$,  
where $\tilde{J}_{GRAPE}$ {\colg (or} ${J}_{GRAPE}${\colg )} is the objective function value in (\ref{eqn:obj_minrqmt}) using $\tilde{u}_i$ {\colg (or using} ${u}_i${\colg )} given the Nash stable partition.
Likewise, $\tilde{\lambda}$ is the value in (\ref{eqn:lambda}) using $\tilde{u}_i$.
In addition to this, if every $\tilde{u}_i$ satisfies the conditions for Theorem \ref{optimality}, then 
\begin{equation}\label{eqn:bound2_minrqmt}
\alpha \ge \frac{1}{2} \cdot \frac{J_{GRAPE}}{J_{GRAPE} + \delta}.
\end{equation}
\end{proposition}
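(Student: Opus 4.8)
The plan is to treat the auxiliary instance $(\mathcal{A},\mathcal{T},\mathcal{P})$ built from the $\tilde{u}_i$ as a genuine GRAPE instance, apply the already-established suboptimality results to it, and then translate the resulting auxiliary-utility bounds back into statements about the original utility $u_i$. By Proposition~\ref{prop:min_rqmt} this auxiliary instance is SPAO, so Theorem~\ref{OPT_Bound} applies verbatim with $u_i$ replaced by $\tilde{u}_i$: writing $\tilde{J}_{OPT}$ for the optimum of the objective evaluated with $\tilde{u}_i$, it yields $\tilde{J}_{GRAPE}/\tilde{J}_{OPT} \ge \tilde{J}_{GRAPE}/(\tilde{J}_{GRAPE}+\tilde{\lambda})$. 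For the special case in Equation~(\ref{eqn:bound2_minrqmt}) I would instead invoke Theorem~\ref{optimality} on the auxiliary instance (legitimate once every $\tilde{u}_i$ meets its hypotheses), giving directly $\tilde{J}_{GRAPE}/\tilde{J}_{OPT} \ge 1/2$.

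The bridge from the auxiliary optimum back to the original optimum is the inequality $J_{OPT} \le \tilde{J}_{OPT}$. First I would note that the original optimal partition $\Pi^*$ is feasible for Problem~\ref{prob_minrqmt}, so every staffed coalition satisfies $|S^*_j| \ge R_j$; hence on $\Pi^*$ the auxiliary and original utilities can differ only on coalitions of size exactly $R_j$, where $\tilde{u}_i = u^0_i(t_j,R_j)$ replaces $u_i(t_j,R_j)$. Provided the dummy values dominate the true ones at $p = R_j$, evaluating $\Pi^*$ under $\tilde{u}_i$ does not decrease its value, so $J_{OPT} = \sum_i u_i(\cdot) \le \sum_i \tilde{u}_i(\cdot) \le \tilde{J}_{OPT}$.

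With these two facts I would factor the target ratio as
\begin{equation*}
\alpha = \frac{J_{GRAPE}}{J_{OPT}} = \frac{J_{GRAPE}}{\tilde{J}_{OPT}} \cdot \frac{\tilde{J}_{OPT}}{J_{OPT}} \ge \frac{J_{GRAPE}}{\tilde{J}_{GRAPE}} \cdot \frac{\tilde{J}_{GRAPE}}{\tilde{J}_{OPT}},
\end{equation*}
where the inequality uses $\tilde{J}_{OPT}/J_{OPT} \ge 1$. The definition $\delta = \tilde{J}_{GRAPE} - J_{GRAPE}$ turns the first factor into $J_{GRAPE}/(J_{GRAPE}+\delta)$, and the second factor is bounded by the auxiliary Theorem~\ref{OPT_Bound} estimate. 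Finally, since $\tilde{J}_{GRAPE} = J_{GRAPE}+\delta \ge J_{GRAPE}$ and $y \mapsto y/(y+\tilde{\lambda})$ is increasing, I replace $\tilde{J}_{GRAPE}/(\tilde{J}_{GRAPE}+\tilde{\lambda})$ by the smaller $J_{GRAPE}/(J_{GRAPE}+\tilde{\lambda})$, producing exactly Equation~(\ref{eqn:bound_minrqmt}). Substituting the $1/2$ estimate for the second factor in the special case gives Equation~(\ref{eqn:bound2_minrqmt}).

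The main obstacle I anticipate is rigorously justifying $J_{OPT} \le \tilde{J}_{OPT}$ — specifically the pointwise domination $u^0_i(t_j,R_j) \ge u_i(t_j,R_j)$ at exactly-filled coalitions. This is implicit in how the dummy utilities are inflated (the $+\beta$ construction) but is \emph{not} forced by Condition~\ref{con:min_rqmt} alone, which only compares $(t_j,R_j)$ with $(t_k,R_k+1)$. I would therefore make this domination an explicit requirement on the dummy utilities (or verify it for the stated construction), and I would also note that the same property is what keeps $\delta \ge 0$, so that both displayed bounds remain genuine ratios not exceeding one.
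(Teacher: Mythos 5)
Your proposal is correct and follows essentially the same route as the paper's own proof: apply Theorem~\ref{OPT_Bound} (or Theorem~\ref{optimality}) to the auxiliary instance, use the pointwise domination $\tilde{u}_i \ge u_i$ to get $\tilde{J}_{OPT} \ge J_{OPT}$ and $\tilde{J}_{GRAPE} = J_{GRAPE}+\delta$, then exploit monotonicity of $y \mapsto y/(y+\tilde{\lambda})$; your multiplicative factorization of $\alpha$ is just a cleaner arrangement of the paper's chained inequalities. Your closing caveat is well taken: the paper simply asserts $\tilde{u}_i(t_j,p) \ge u_i(t_j,p)$ for all $p$, which at $p = R_j$ (and $p<R_j$, where $u_i=0$) is an extra requirement on the dummy utilities not implied by Condition~\ref{con:min_rqmt} alone, so making it explicit genuinely tightens the argument.
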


\begin{proof}
Since the Nash stable partition $\Pi$ is obtained by using $\tilde{u}_i$, it can be said from Equations (\ref{eqn:approx_ratio}) and (\ref{Eq_OPT_1}) that
\begin{equation}\label{eqn:proposition_2}
\frac{\tilde{J}_{GRAPE}}{\tilde{J}_{OPT}} \ge \frac{\tilde{J}_{GRAPE}}{\tilde{J}_{GRAPE} + \tilde{\lambda}}.
\end{equation}
Due to the fact that $\tilde{u}_i(t_j,p) \ge {u}_i(t_j,p)$ for $\forall i,j,p$, 
it is clear that $\tilde{J}_{GRAPE} \ge {J}_{GRAPE}$ and $\tilde{J}_{OPT} \ge {J}_{OPT}$. 
By letting that $\delta := \tilde{J}_{GRAPE} - J_{GRAPE}$, the left term in (\ref{eqn:proposition_2}) is at most 
$({J}_{GRAPE} + \delta)/{{J}_{OPT}}$. 
Besides, the right term in (\ref{eqn:proposition_2}) is a monotonically-increasing function with regard to $\tilde{J}_{GRAPE}$, and thus, 
it is lower bounded by ${J}_{GRAPE}/(J_{GRAPE} \ {\colg +} \ \tilde{\lambda})$.
From this, Equation (\ref{eqn:proposition_2}) can be rewritten as Equation $(\ref{eqn:bound_minrqmt})$ by multiplying $J_{GRAPE}/(J_{GRAPE}+\delta)$.

Likewise, for the case when every $\tilde{u}_i$ satisfies the conditions for Theorem \ref{optimality}, it can be said that
$\tilde{J}_{GRAPE} \ge 1/2 \cdot \tilde{J}_{OPT}$, which can be transformed into Equation (\ref{eqn:bound2_minrqmt}) as shown above.
\end{proof}

Notice that if $\delta = 0$ for the Nash stable partition in Proposition \ref{prop:subbound_minrqmt}, then the suboptimality bounds become equivalent to those in Theorems \ref{OPT_Bound} and \ref{optimality}.

\section{Simulation and Results}\label{Results}

This section validates the performances of the proposed framework with respect to its scalability, suboptimality, adaptability against dynamic environments, and robustness in asynchronous environments.  


\subsection{Mission Scenario and Settings}\label{sec:setting}


\subsubsection{Utility functions}
Firstly, we introduce the social and individual utilities used in this numerical experiment. 
%
We consider that if multiple robots execute a task together as a coalition, {\colg then} they are given a certain level of reward for the task. 
The amount of the reward varies depending on the number of the co-working agents. 
The reward is shared with the agents, 
and each agent's individual utility is considered as the shared reward minus the cost required to personally spend {\colg on} the task (e.g., fuel consumption for movement). 
In this experiment, \emph{the equal fair allocation rule} \cite{Saad2009, Saad2011} is adopted.
Under the rule, a task's reward is equally shared {\col among} the members. 
Therefore, the individual utility of agent $a_i$ executing task $t_j$ with coalition $S_j$ is defined as
\begin{equation}
\label{basic_utility_ftn}
u_i(t_j,|S_j|)=r(t_j, |S_j|)/|S_j|-c_i(t_j),
\end{equation}
where $r(t_j, |S_j|)$ is the reward from task ${t_j}$ when it is executed by $S_j$ together, 
and $c_i(t_j)$ is the cost that agent ${a_i}$ needs to pay for the task.
Here, we simply set the cost as a function of the distance from agent $a_i$ to task $t_j$. 
We set that if $u_i(t_j,|S_j|)$ is not positive, agent $a_i$ prefers to join $S_{\phi}$ over $S_j$. 

This experiment considers two types of tasks. 
For the first type, a task's reward becomes higher as the number of participants gets close to a specific desired number. 
We refer to such a task as a \emph{peaked-reward} task, and its reward can be defined as 
\begin{equation}
\label{peaked_task}
r(t_j, |S_j|)=\frac{r^{\max}_{j} \cdot |S_j|}{n^d_j} \cdot e^{-|S_j|/n^d_j + 1},
\end{equation}
where ${n^d_j}$ represents the desired number, and ${r^{\max}_j}$ is the peaked reward in case that $n^d_j$ of agents are involved in. 
Consequently, the individual utility {\colg of} agent $a_i$ with regard to task $t_j$ becomes the following equation:
\begin{equation}
\label{utility_ftn_1}
u_i(t_j,|S_j|)=\frac{r^{\max}_j}{n^d_j} \cdot e^{-|S_j|/n^d_j + 1}- c_i(t_j).
\end{equation}

For the second type, a task's reward becomes higher as more agents are involved, but the corresponding marginal gain decreases. 
This type of tasks is said to be \emph{submodular-reward}, and the reward can be defined as  
\begin{equation}
\label{submodular_task}
r(t_j, |S_j|)=r^{\min}_j \cdot \log_{{\epsilon}_j}(|S_j|+{\epsilon}_j-1),
\end{equation}
where ${r^{\min}_j}$ indicates the reward obtained if there is only one agent {\colg is} involved, and ${{\epsilon}_j} > 1$ is the design parameter regarding the diminishing marginal gain. 
The resultant individual utility becomes as follows:
\begin{equation}
\label{utility_ftn_2}
u_i(t_j,|S_j|)=r^{\min}_j \cdot \log_{{\epsilon}_j}(|S_j|+{\epsilon}_j-1)/|S_j| - c_i(t_j).
\end{equation}

\begin{figure}[t]
\centering
\subfloat[Social utility]{\includegraphics[width=0.50\linewidth]{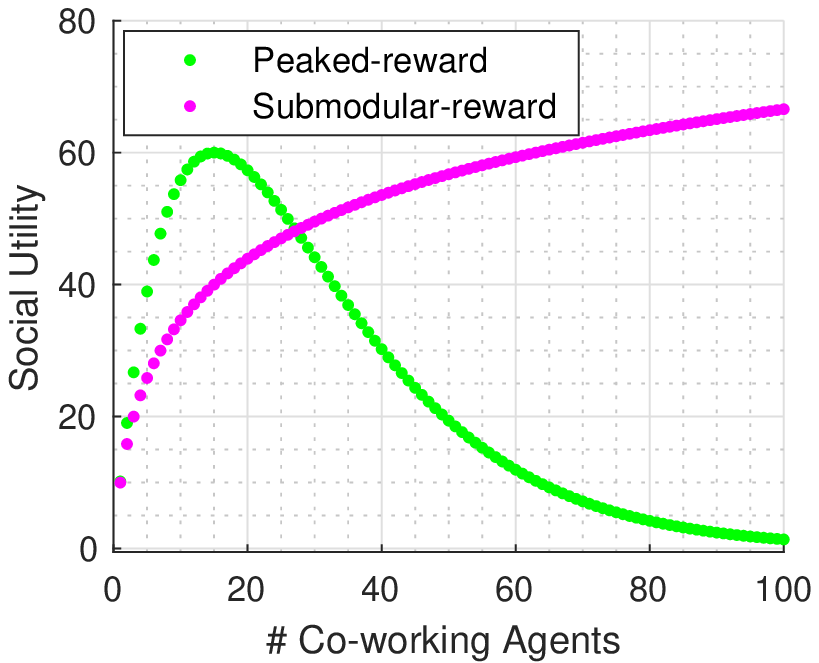}} 
\subfloat[Individual utility]{\includegraphics[width=0.50\linewidth]{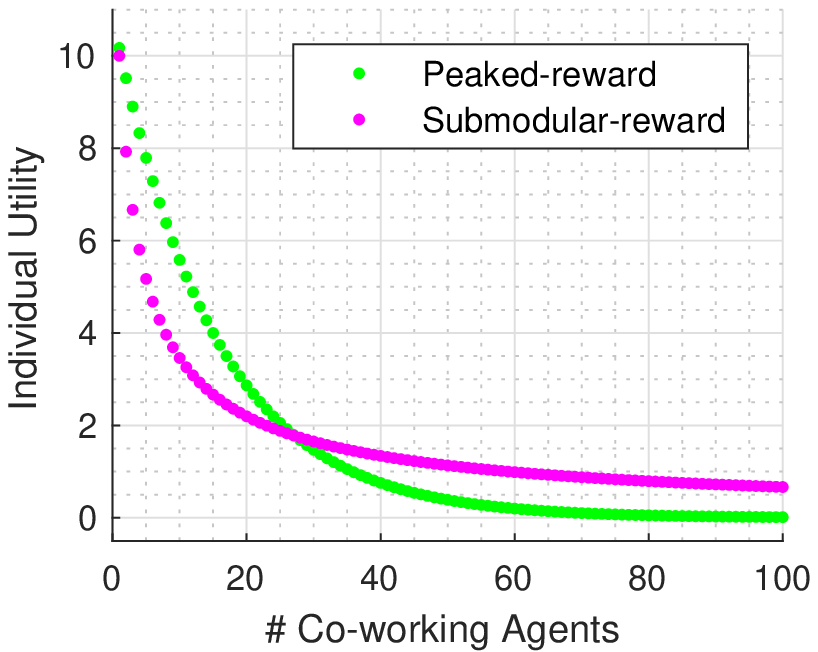}}
\caption{\col Examples of utility functions used in the numerical experiment are shown, depending on the two different task types (i.e., peaked-reward and submodular-reward): \textbf{(a)} the social utility of a coalition; \textbf{(b)} an agent's individual utility. 
}
\label{fig_utility_ftn}
\end{figure}

Figure \ref{fig_utility_ftn} illustrates examples of the social utilities and individual utilities for the task types introduced above.  
For simplification, agents' costs are ignored in the figure. 
We set ${r^{\max}_j}$, ${n^d_j}$, ${r^{\min}_j}$ and ${\epsilon_j}$ to be 60, 15, 10, and 2, respectively.
Notice that the individual utilities are monotonically decreasing in both cases, as depicted in Figure \ref{fig_utility_ftn}(b).
Therefore, given a mission that entails these task types, we can generate an instance ${(\mathcal{A},\mathcal{T},\mathcal{P})}$ of GRAPE that holds SPAO. 





\subsubsection{Parameters generation}
In the following sections, we will mainly {\col utilize} Monte Carlo simulations. 
At each run, $n_t$ tasks and $n_a$ agents are uniform-randomly located in a $1000 \ m \times 1000 \ m$ arena and a $250 \ m \times 250 \ m$ arena within there, respectively. 
For a scenario including peaked-reward tasks, 
$r^{\max}_j$ is randomly generated from a uniform distribution over $[1000 , 2000] \times n_a/n_t${\colg ,} 
and $n^d_j$ is {\colg set to be} the rounded value of $(r^{\max}_j/{\sum_{\forall t_k \in \mathcal{T^*}} r^{\max}_k}) \times n_a$. 
For a scenario including submodular-reward tasks, 
$\epsilon_j$ is set as 2, and $r^{\min}_j$ is uniform-randomly generated over $[1000, 2000] \times 1/\log_{{\epsilon}_j}{(n_a/n_t + 1)}$. 


\subsubsection{Communication network}
Given a set of agents, 
their communication network is strongly-connected  
{\colg in a way that only contains a bidirectional minimum spanning tree with consideration of the agents' positions.} 
Furthermore, we also consider the fully-connected network in some experiments in order to examine the influence of the network.
The communication network is randomly generated at each instance, and is assumed to be sustained during a mission except the robustness test simulations in Section \ref{sec:result_robust}.

\subsection{Scalability}
\label{sec:result_scalability}
To investigate the effectiveness of $n_t$ and $n_a$ upon the scalability of the proposed approach, 
we conduct Monte Carlo simulation{\colg s} with 100 runs
for the scenarios introduced in Section \ref{sec:setting} with a fixed $n_t = 20$ and various $n_a \in \{80, 160, 240, 320\}$ and for those with $n_a = 160$ and $n_t \in \{5, 10, 15, 20\}$.
Figure \ref{fig_result_scalability} shows the statistical results using box-and-whisker plots, where  
the green boxes indicate the results from the scenarios with the peaked-reward tasks and the magenta boxes are those with the submodular-reward tasks. 
The blue and red lines connecting the boxes represent the average value for each test case $(n_a, n_t)$ under a strongly-connected network and the fully-connected network, respectively.

\begin{figure}[t]
\centering
\subfloat[$n_a \in \{80, 160, 240, 320\}$ with fixed $n_t = 20$]{ 
	\includegraphics[width=0.50\linewidth]{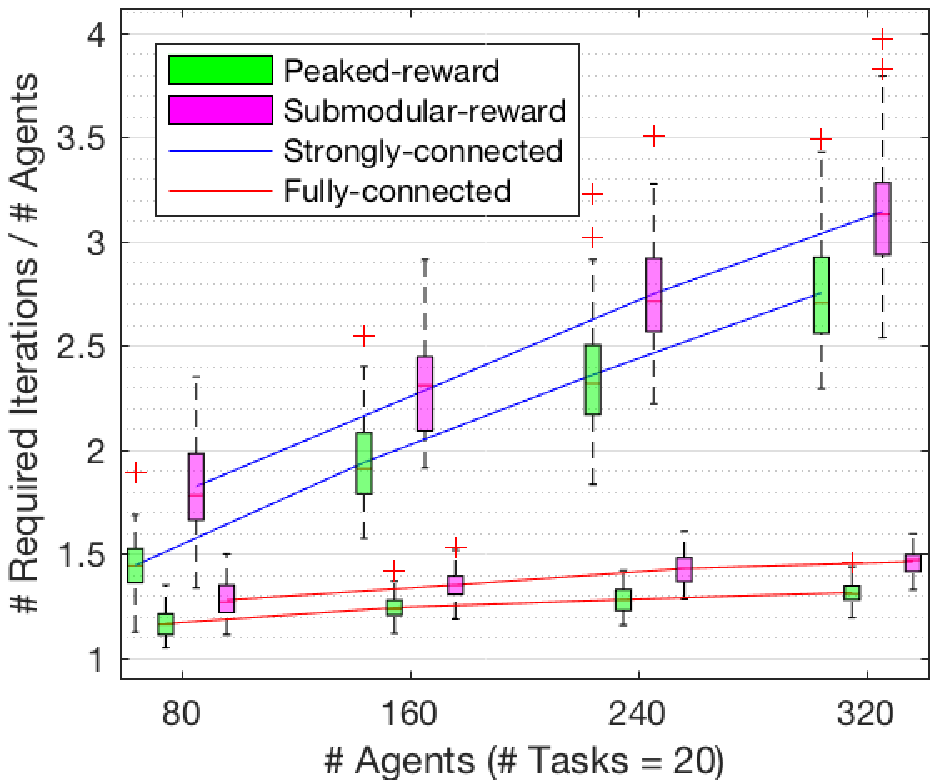} 
	\includegraphics[width=0.50\linewidth]{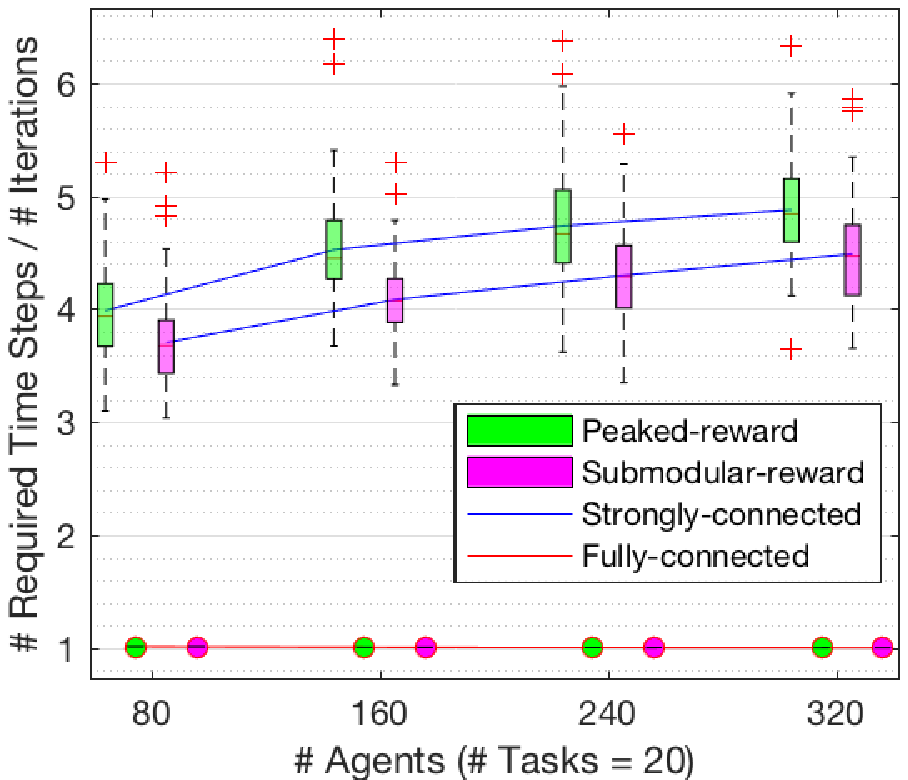}	
	}\hfill
\subfloat[$n_t \in \{5,10,15,20\}$ with fixed $n_a = 160$]{ %
	\includegraphics[width=0.50\linewidth]{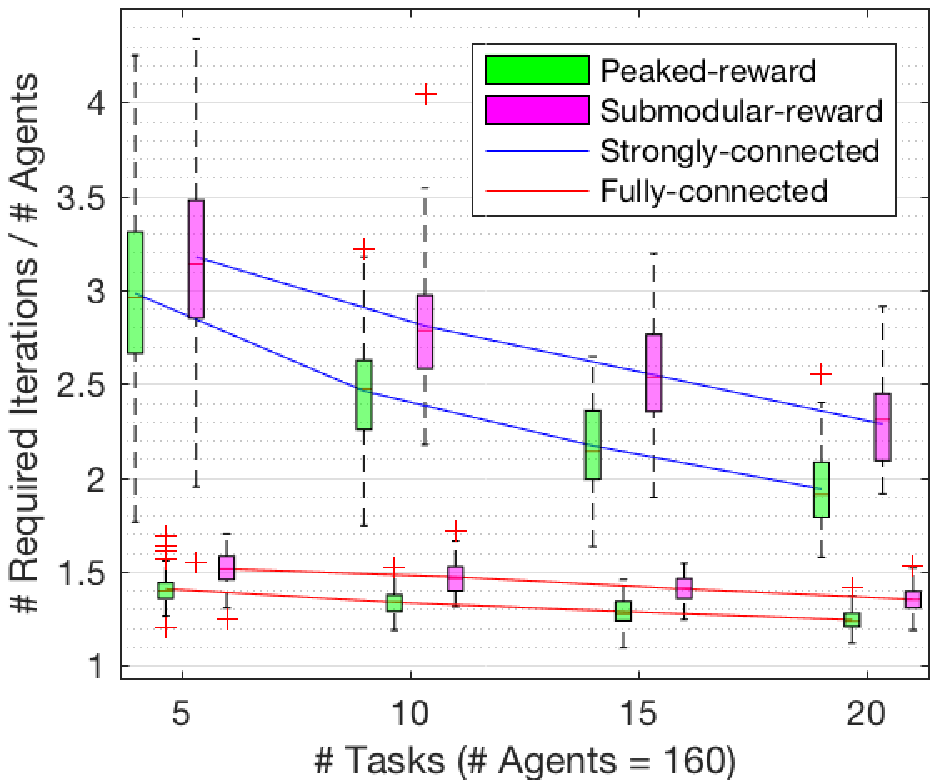} 	
	\includegraphics[width=0.50\linewidth]{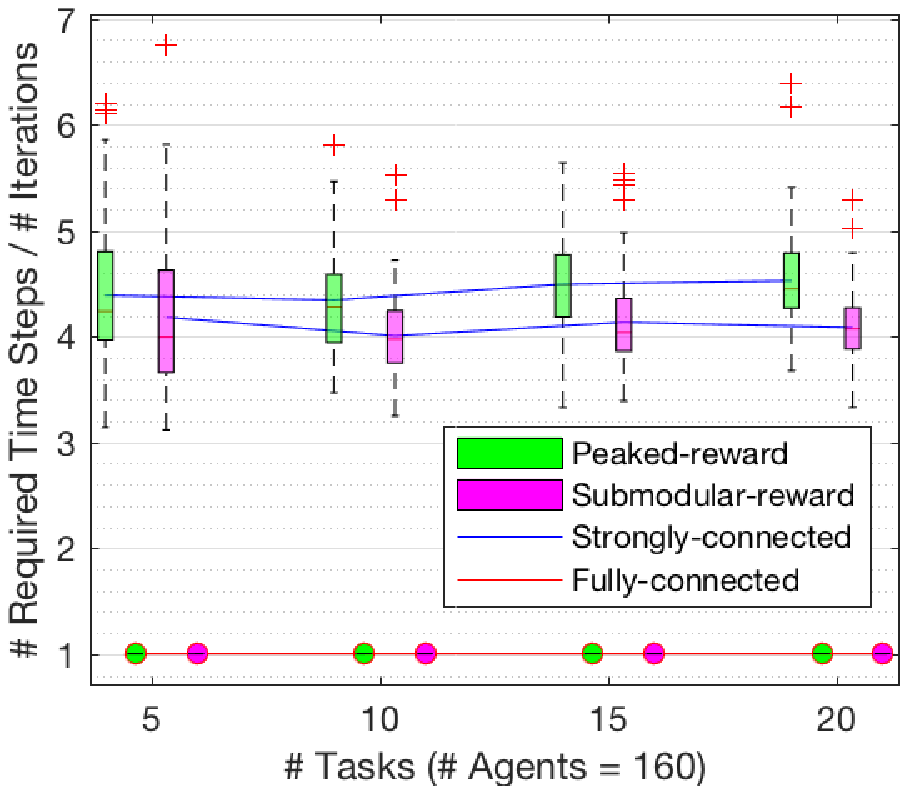}
	}\hfill
\caption{{\col Convergence performance of the proposed framework is shown,} depending on communication networks (i.e., Strongly-connected vs. Fully-connected) and utility function types (i.e., Peaked-reward vs. Submodular-reward) {\col with different number of agents and tasks: \textbf{(Left)} the number of (normal) iterations happened relative to that of agents; \textbf{(Right)} the number of time steps happened (i.e., normal and dummy iterations) relative to that of iterations. }}
\label{fig_result_scalability}
\end{figure}

%

The left subfigure in Figure \ref{fig_result_scalability}(a) shows that the ratio of the number of required (normal) iterations to that of agents linearly increases as more agents are involved. 
This implies that {\colg the proposed framework has quadratic complexity with regard to the number of agents} (i.e., $C_1 n_a^2$), as stated in Theorem \ref{Nash_poly}, but with $C_1$ being much less than $\frac{1}{2}$, which is the value from the theorem.
$C_1$ can become {\colg even lower} (e.g., $C_1 = 5 \times 10^{-4}$ in the experiments) under the fully-connected network. 
Such $C_1$ being smaller than $\frac{1}{2}$ may be explained by Remark \ref{remark:num_iter_practice}: 
the algorithmic efficiency of Algorithm \ref{algorithm} can reduce unnecessary iterations that may be induced in the procedure of the proof for Theorem \ref{Nash_poly}.  

On the other hand, the left subfigure in Figure \ref{fig_result_scalability}(b) {\colg shows} that the number of required iterations decreases with regard to the number of tasks.
This {\colg trend} may be caused by the fact that more selectable options provided to {\colg the} fixed number of agents can reduce possible conflicts between the agents. 

Furthermore, in {\colg the two results}, the trends regarding either $n_a$ or $n_t$ have higher slope{\colg s} under a strongly-connected network than {\colg those} under the fully-connected network. 
This is because the former condition is more sensitive to conflicts between agents, and thus causes additional iterations. 
For example, agents at the middle nodes of the network may change their decisions (and thus increase the number of iterations) while the local partition information of the agent at one end node is being propagated to the other end nodes. 
Such unnecessary iterations {\colg in the middle} might not have occurred if the agents at {\colg all} the end nodes were directly connected {\colg to each other}.

The right subfigures in Figure \ref{fig_result_scalability}(a) and (b) indicate that
approximately 3--4 times of dummy iterations, compared with the required number of normal iterations, are additionally needed under a strongly-connected network. 
Noting that the mean values of the graph diameter $d_G$ for the instances with $n_a \in \{80, 160, 240, 320\}$ are $36, 58, 75$ and $92$, respectively, 
the results show that the amount of dummy iterations happened is much less than the bound value, which is $d_G$ as pointed out in Section \ref{sec:algorithm_complexity}.
On the contrary, under the fully-connected network{\colg ,} there is no need of such a dummy iteration, and thus the required number of iterations and that of time steps are the same.

\subsection{Suboptimality}

This section examines the suboptimality of the proposed framework by using Monte Carlo simulations with 100 instances. 
In each instance, there are $n_t = 3$ of tasks and $n_a = 12$ of agents {\colg who} are strongly-connected. 
Figure \ref{fig_result_minimum_bound} presents the true {\colg sub}optimality {\colg of} each instance, which is the ratio of the global utility obtained by the proposed framework to that by a brute-force search, i.e., $J_{GRAPE}/J_{OPT}$, and the lower bound given by Theorem \ref{OPT_Bound}.
A blue circle and a red cross in the figure indicate the true suboptimality and  the lower bound, respectively. 
The results show that the framework provides near-optimal solutions in almost all cases and the suboptimality of each Nash stable partition is enclosed by the corresponding lower bound.  

The suboptimality may be improved 
if the agents are allowed to investigate a larger search space, for {\col example}, possible coalitions caused by co-deviation of multiple agents. 
However, this strategy in return may increase communication transactions {\colg between} the agents because they have to notice each other's willingness unless their individual utility functions are known {\colg to each} other, which is in {\colg contradiction} to Assumption \ref{assum:agents_util}. 
Besides, the computational overhead for {\colg each} agent {\colg per} iteration also becomes more expensive than $O(n_t)$, which is the complexity for {\colg unilateral} searching{\colg, as shown in Section \ref{sec:algorithm_complexity}}. 
Hence, the resultant algorithm{\colg's} complexity may hinder its {\colg practical} applicability to a large-scale multiple agent system. 

Figure \ref{fig_result_minimum_bound_large} depicts the suboptimality lower bounds for the large-size problems that were previously addressed in Section \ref{sec:result_scalability}. 
It is clearly shown that the agent communication network does not make any effect on the suboptimality lower bound of a Nash stable partition. 
Although there is no universal trend of the suboptimality with regard to $n_a$ and $n_t$ in both utility types, 
it is suggested that the features of the lower bound given by Theorem \ref{OPT_Bound} can be influenced by the utility functions considered. 
In the experiments, the suboptimality bound averagely remain above than 60--70 \%.




\begin{figure}[t]
\centering
\subfloat[Peaked-reward task]{\includegraphics[width=0.48\linewidth]{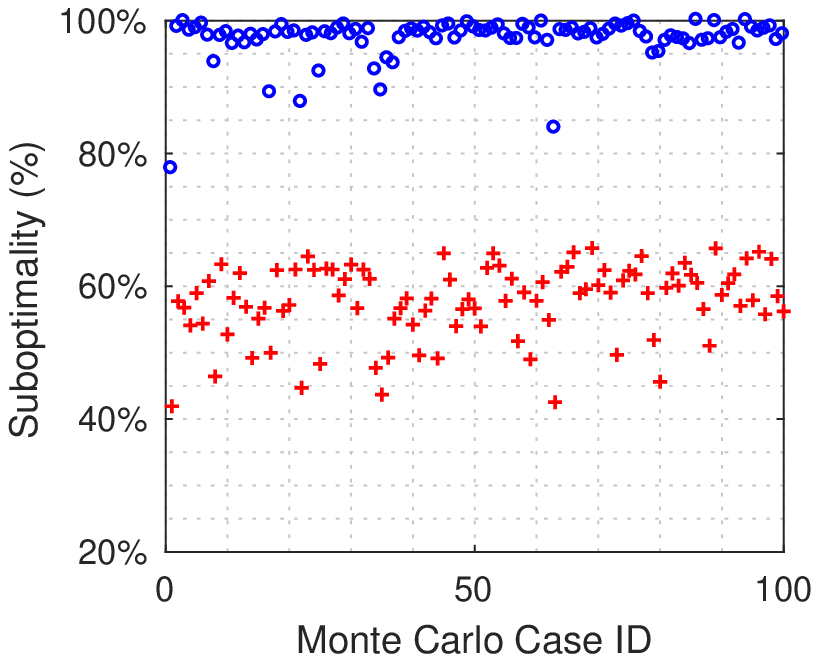} 
\label{fig_result_minimum_bound_peaked}}
\hfil
\subfloat[Submodular-reward task]{\includegraphics[width=0.48\linewidth]{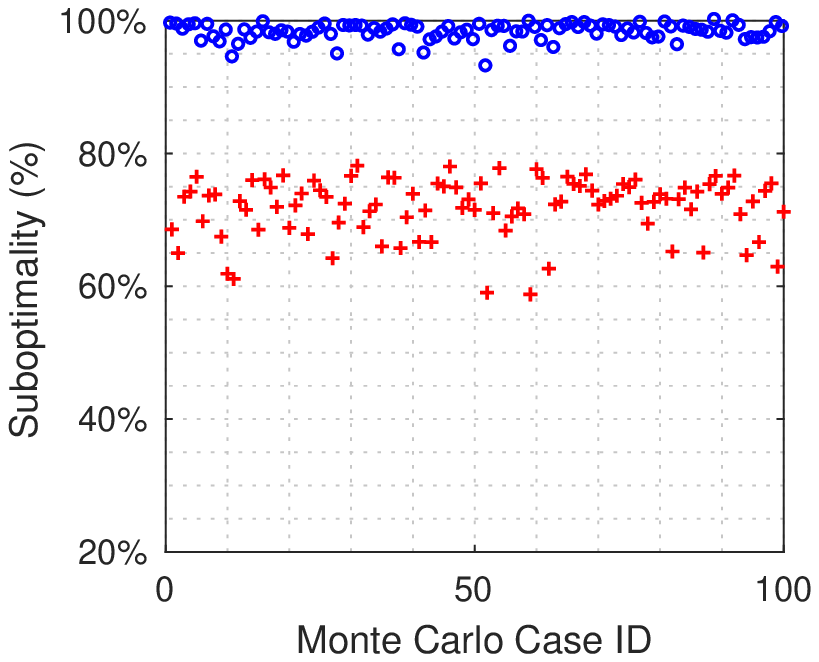}
\label{fig_result_minimum_bound_log}}
\caption{True suboptimality {\col of a Nash stable partition} obtained by GRAPE {\col for each run of the Monte Carlo simulation} ({\col denoted by a} blue circle) and its lower bound provided by Theorem \ref{OPT_Bound} ({\col denoted by a} red cross) under a strongly-connected communication network{\col : \textbf{(a)} the scenarios with peaked-reward tasks; \textbf{(b)} the scenarios with submodular-reward tasks}}
\label{fig_result_minimum_bound}
\end{figure}

\begin{figure}[t]
\centering
\subfloat[]{ 
	\includegraphics[width=0.50\linewidth]{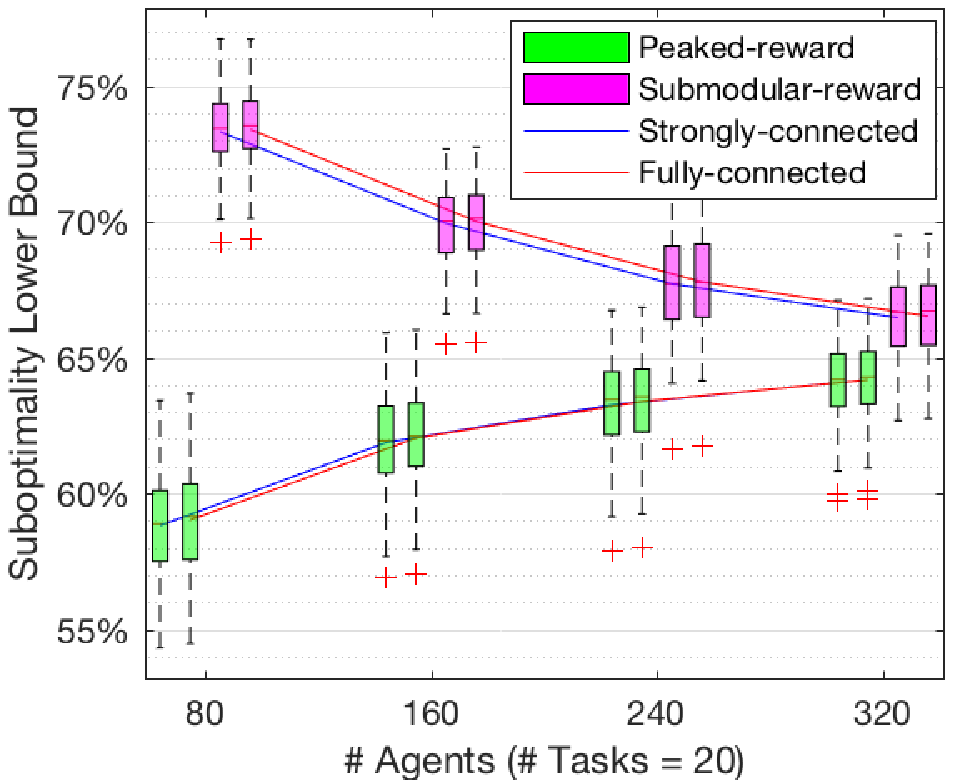}}
\subfloat[]{ %
	\includegraphics[width=0.50\linewidth]{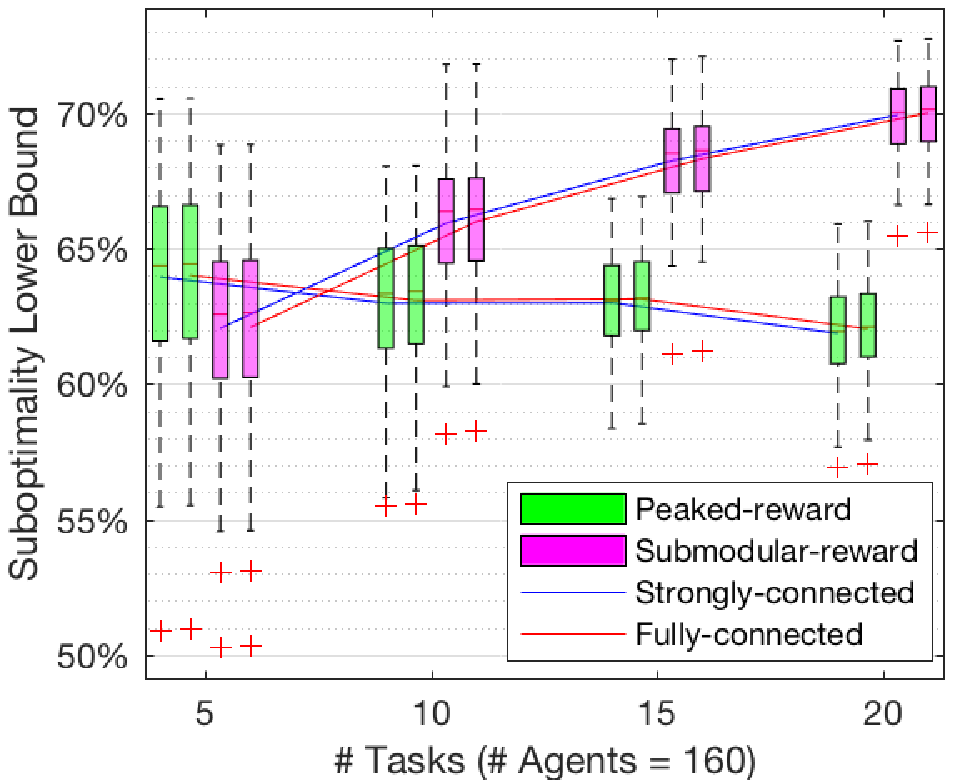}
	}
\caption{The suboptimality lower bound, given by Theorem \ref{OPT_Bound}, of a Nash stable partition {\col obtained by GRAPE, depending on communication networks (i.e., Strongly-connected vs. Fully-connected) and utility function types (i.e., Peaked-reward vs. Submodular-reward)}: 
{\col \textbf{(a)}} fixed $n_t = 20$ with varying $n_a \in \{80, 160, 240, 320\}$; {\col \textbf{(b)}} fixed $n_a = 160$ with varying $n_t \in \{5,10,15,20\}$
}
\label{fig_result_minimum_bound_large}
\end{figure}

%
%
%
%


\subsection{Adaptability}

\begin{figure}[t]
\centering
\subfloat[Dynamic Agents]{\includegraphics[width=0.9\linewidth]{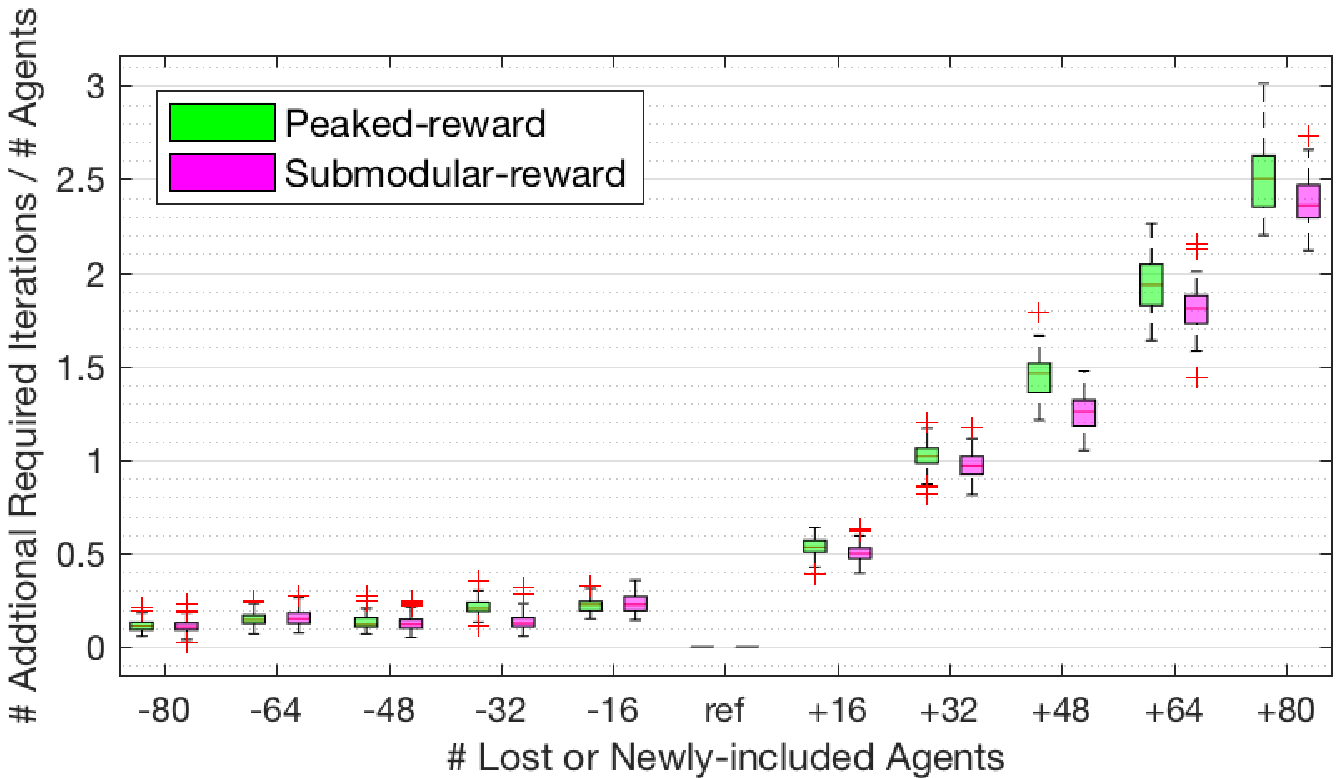}}
\hfil
\subfloat[Dynamic Tasks]{\includegraphics[width=0.9\linewidth]{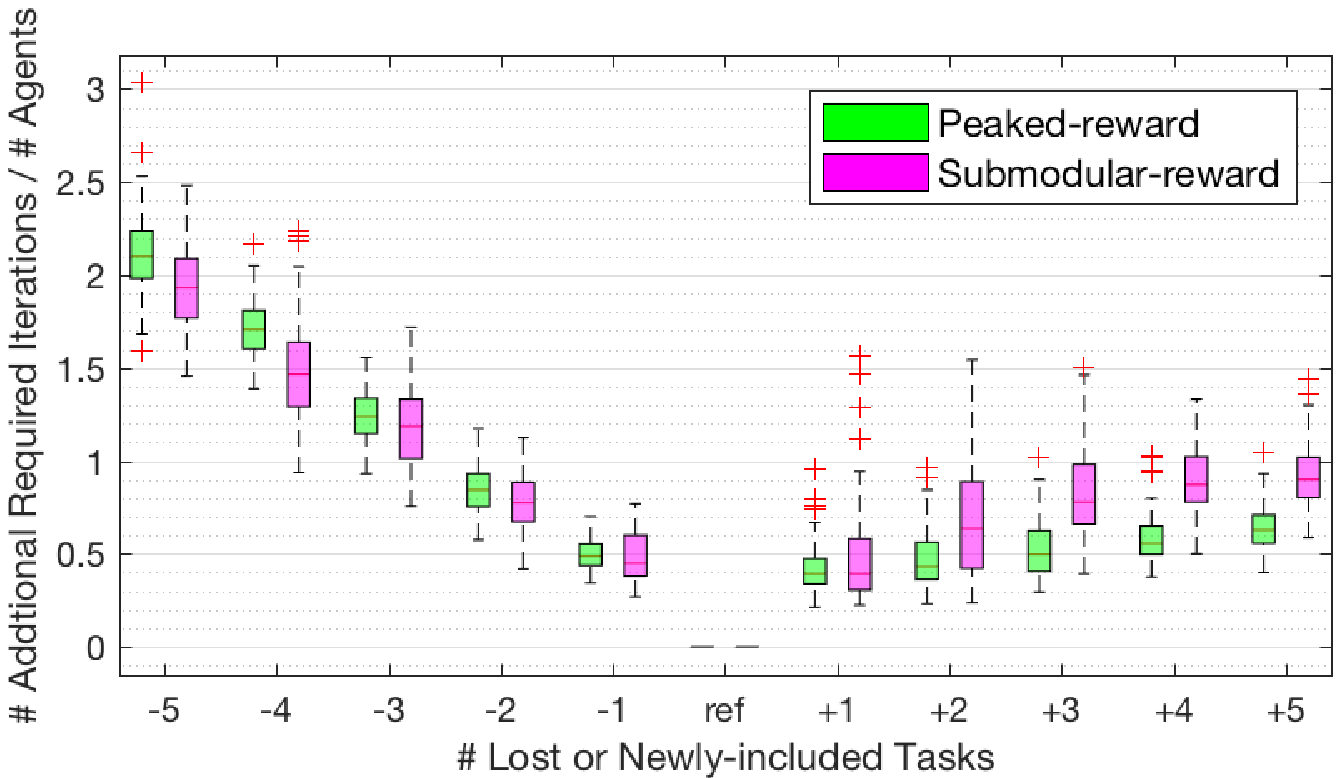}}
\caption{{\col The plot shows} the number of additional iterations required for re-converging {\col toward} a Nash stable partition relative to the number of agents {\col in the case when some agents or tasks are partially lost or newly involved (Baseline: $n_t = 10$, $n_a = 160$, and a Nash stable partition was already found)}. Negative values in the x-axis indicate that the corresponding number of existing agents or tasks are lost{\col . P}ositive values indicate that the corresponding number of new agents or tasks are included in an ongoing mission{\col . A} strongly-connected communication network is used.}
\label{fig_result_adaptability}
\end{figure}

This section discusses the adaptability of our proposed framework in response to dynamic environments such as {\colg unexpected} inclusion or loss of agents {\colg or} tasks. 
Suppose that there are 10 tasks and 160 agents in a mission, and a Nash stable partition was already found as a baseline.
During the mission, the number of agents (or tasks) changes; the range of the change is from losing 50\% of the existing agents (or tasks) to additionally including new ones as much as 50\% of them. 
For each dynamical environment, a Monte Carlo simulation with 100 instances {\colg is} performed by randomly including or excluding a subset of the corresponding number of agents or tasks. 
Here, we consider a strongly-connected communication network. 

Figure \ref{fig_result_adaptability}(a) illustrates that 
the more agents are involved {\colg additionally}, the more iterations {\colg are} required for {\colg re-}converging to a new Nash stable partition. 
This is because the inclusion of a new agent may lead to additional iterations at most as much as the number of {\colg the} total agents including the {\colg new} agent ({\colg as shown in} Lemma \ref{lemma_1}). 
{\colg On the contrary}, the loss of existing agents does not seem to have any apparent relation with the number of iterations.    
A possible explanation is that the exclusion of an existing agent is {\col favorable} to {\colg the} other agents due to SPAO preferences.
This stimulates only a limited number of agents who are preferred to move to the coalition where the excluded agent was. 
This feature induces fewer additional iterations to reach a new Nash stable partition, compared with the case of adding a new agent.

Figure \ref{fig_result_adaptability}(b) shows that eliminating existing tasks causes more iterations than {\colg including} new tasks. This can be explained {\colg by} the fact that removing any task releases the agents performing the task free and it results in extra iterations as much as the number of the freed agents. On the other hand, adding new tasks induces relatively fewer additional iterations because only some of {\colg the} existing agents are attracted to {\colg these} tasks.

In summary, as the ratio of the number of agents to that of tasks increases, the number of additional iterations {\colg for} converge{\colg nce to} a new Nash stable partition also increases. 
This result corresponds to the trend described in Section \ref{sec:result_scalability}, 
i.e., the left subfigures in Figure \ref{fig_result_scalability}(a) and (b). 
In all the cases of this experiment, the number of additionally induced iterations still remains {\colg at} the same order of the number of the given agents, which implies that the proposed framework provides excellent adaptability.

\subsection{Robustness in Asynchronous Environments}\label{sec:result_robust}

This section investigates the robustness of the proposed framework in asynchronous environments. 
This scenario assumes that a certain fraction of the given agents, which are randomly chosen at each time step, somehow can not execute Algorithm \ref{algorithm} and even can not communicate with other normally-working {\col neighbor} agents. 
We refer to such agents as \emph{non-operating} agents. 
Given that $n_t = 5$ and $n_a = 40$, 
the fractions of the non-operating agents are set as $\{0, 0.2, 0.4, 0.6, 0.8\}$. 
In each case, we conduct 100 instances of Monte Carlo experiments for which the submodular-reward {\colg tasks} are used.

Figure \ref{fig_result_async}(a) presents that 
the number of (normal) iterations required for converging {\colg to} a Nash stable partition remains {\colg at} the same level regardless of the fraction of the non-operating agents. 
Despite that, the required time steps increase as more agents become non-operating, as shown in Figure \ref{fig_result_async}(b). 
Note that \emph{time steps growth rate} means the ratio of the total required time steps to those for the case {\colg when all the agents operate normally.} 
These findings indicate that, due to communicational discontinuity caused by the non-operating agents, the framework may take more time to wait for these agents to operate again and then to disseminate {\colg locally-known} partition information over the entire agents. 
{\colg As such,} dummy iterations may increase in asynchronous environments, {\colg though}
the proposed framework is still able to find a Nash stable partition. 
Furthermore, the resultant Nash stable partition's {\colg suboptimality} lower bound obtained by Theorem \ref{OPT_Bound} is not affected, as presented in Figure \ref{fig_result_async}(c).

\begin{figure}[t]
\centering
\subfloat[]{\includegraphics[width=0.33\linewidth]{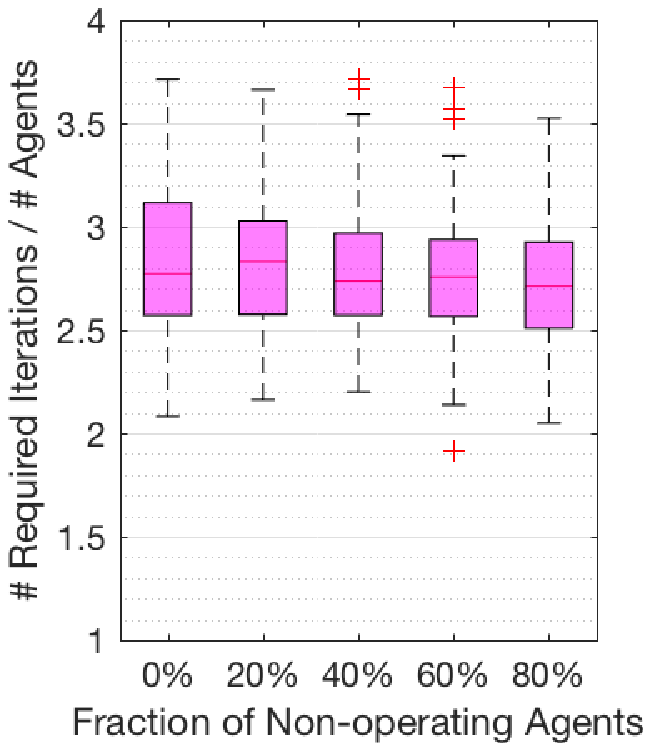}}
\hfil
\subfloat[]{\includegraphics[width=0.33\linewidth]{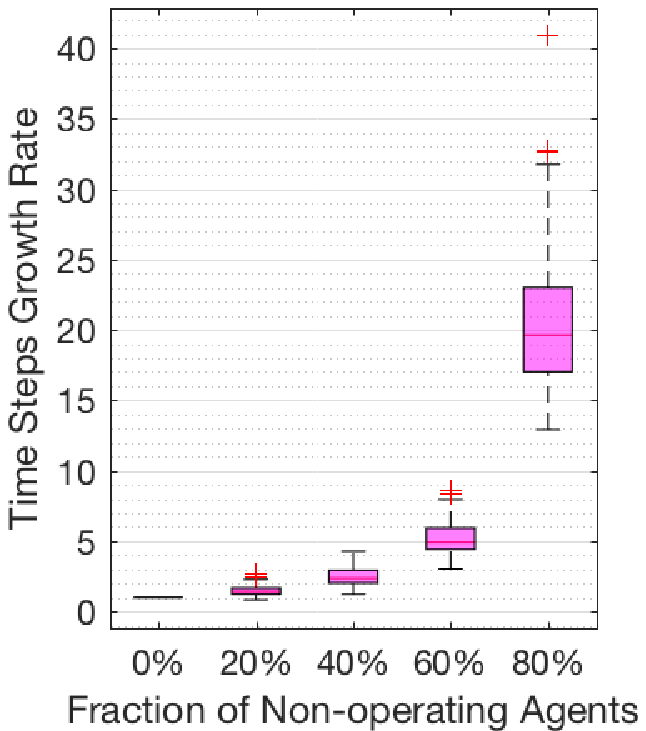}}
\hfill
\subfloat[]{\includegraphics[width=0.33\linewidth]{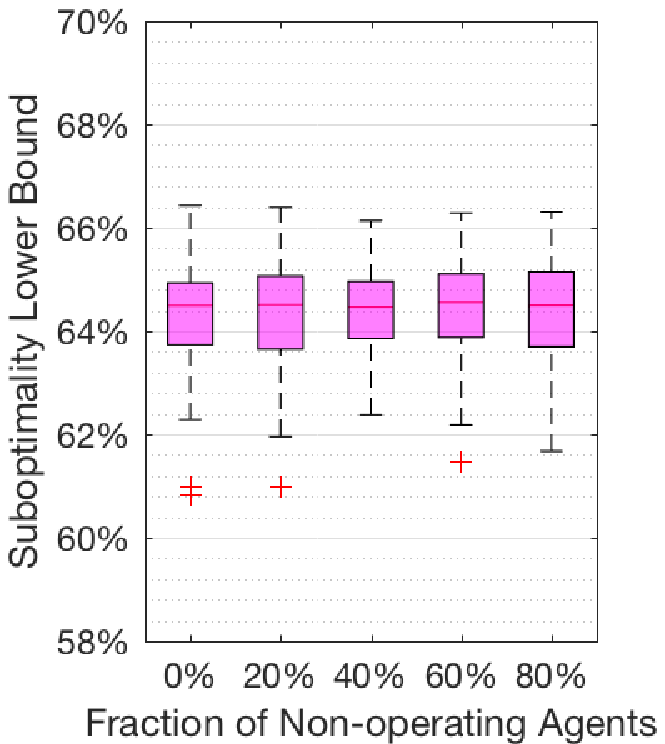}}
\caption{
Robustness test in asynchronous environments {\col at scenarios with} $n_t = 10, n_a = 160$, and the submodular-reward tasks: {\col the plot shows} the effectiveness of {\col the fraction of} non-operating agents {\col with regard to: \textbf{(a)} the number of iterations happened until convergence relative to that of agents; \textbf{(b)} the ratio of the time steps happened to those for the normal case; \textbf{(c)} the suboptimality lower bound by Theorem \ref{OPT_Bound}.}
 }
\label{fig_result_async}
\end{figure}

\subsection{Visuali{\col z}ation}

We have $n_a = 320$ agents and $n_t = 5$ tasks. 
The initial locations of the given agents are randomly generated, and the overall formation shape is different in each test scenario such as {\colg being} circle, skewed circle, and square (denoted {\colg by} Scenario \#1, \#2, and \#3, respectively). 
The tasks are also randomly located away from the agents. 
In this simulation, each agent is able to communicate with its {\colg nearby} agents {\colg within a radius of} 50 $m$. 
Here, the submodular-reward {\colg tasks} are used.

Figure \ref{fig_result_visual} shows the {\col visualize}d task allocation results{\colg ,} where the circles and the squares indicate the positions of the agents and the tasks, respectively. 
The lines between the circles represent the communication networks of the agents. 
The {\col color}ed agents are assigned to the same {\col color}ed task, for example, yellow agents belong to {\colg the} team for executing the yellow task. 
The {\colg size of a square} indicates the reward of the {\colg corresponding} task. 
{\colg T}he cost for an agent {\colg with regard to a} task is considered as a function of the distance from the agent to the task. 
The allocation results seem to be reasonable with consideration of the task rewards and the costs.

The number of iterations required to find a Nash stable partition is 1355, 1380, and 1295 for Scenario \#1, \#2, and \#3, respectively. 
The number of dummy iterations happened is {\colg just} $20$--$30\%$ of that of the iterations. 
{\colg This value is much} fewer than the results in Figure \ref{fig_result_scalability} because 
the networks {\colg considered here} are more connected than {\colg those in Section \ref{sec:result_scalability}}.

\begin{figure}
\centering
\subfloat[Scenario \#1]{\includegraphics[width=0.8\linewidth]{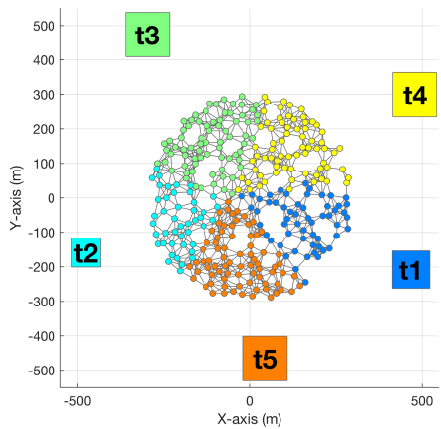}}
\hfil
\subfloat[Scenario \#2]{\includegraphics[width=0.8\linewidth]{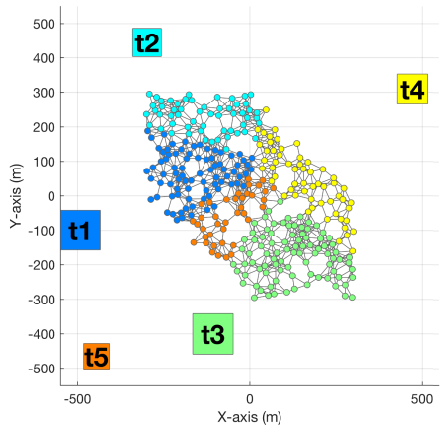}}
\hfil
\subfloat[Scenario \#3]{\includegraphics[width=0.8\linewidth]{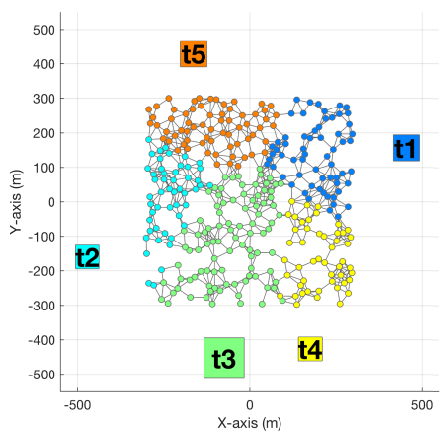}}
\hfil
\caption{{\col Visualize}d task allocation results {\col are shown with different geographic scenarios} ($n_t = 5$, $n_a = 320$). {\col Each square and its size represent each task's position and its reward (or demand), respectively. The circles and the lines between them indicate the positions of agents and their communication network, respectively. The color of each circle implies that the corresponding agent is assigned to the same colored task.}}
\label{fig_result_visual}
\end{figure}


\section{Conclusion}\label{Conclusion}

This paper proposed a novel game-theoretical framework that addresses a task allocation problem for a robotic swarm consisting of self-interested agents. 
We showed that selfish agents whose individual interests are transformable to SPAO preferences can converge to a Nash stable partition by using the proposed simple {\col decentralized} algorithm, which is executable even in asynchronous environments and under a strongly-connected communication network. 
We analytically and experimentally presented that the proposed framework provides scalability, a certain level of guaranteed suboptimality, adaptability, robustness, and a potential to accommodate different interests of agents.

As this framework can be considered as a new sub-branch of self-{\col organize}d approaches, 
one of our ongoing works is to compare it with one of the existing methods. 
Defining a fair scenario for both methods is non-trivial and requires careful consideration; otherwise, a resultant unsuitable scenario may provide biased results.
Secondly, another natural progression of this study is to relax anonymity of agents and thus to consider a combination of the agents' identities. 
Experimentally, we have often observed that heterogeneous agents with social inhibition also can converge to a Nash stable partition.  
More research would be needed to {\col analyze} the quality of a Nash stable partition obtained by the proposed framework in terms of $\min\max$ 
because our various experiments showed that the outcome provides individual utilities to agents in a balanced manner.

\section*{Acknowledgment}
The authors gratefully acknowledge that this research was supported by International Joint Research Programme with Chungnam National University (No. EFA3004Z)


\ifCLASSOPTIONcaptionsoff
  \newpage
\fi



\bibliographystyle{IEEEtran}
\bibliography{library}
\begin{IEEEbiographynophoto} \
 IEEE Copyright Notice $\copyright$ 2018 IEEE. Personal use of this material is permitted. Permission from IEEE must be obtained for all
other uses, in any current or future media, including reprinting/republishing this material for advertising
or promotional purposes, creating new collective works, for resale or redistribution to servers or lists, or
reuse of any copyrighted component of this work in other works.

\textbf{Accepted to be Published in: IEEE Transactions on Robotics}
\end{IEEEbiographynophoto}
%



\end{document}